\def\pathfig{figures}
\definecolor{DarkGreen}{rgb}{0.1,0.7,0.1}
\definecolor{MyBlue}{RGB}{15,60,210}
\tikzstyle{vertex}=[draw=black,fill=white,circle,inner sep=1.4pt]
\tikzstyle{accolade}=[decorate,decoration={brace,amplitude=5pt,raise=1.5ex}]
\def\listofdef#1#2#3{%
    \begingroup
    \readlist*\L{#3}%
    \foreachitem\i\in\L{%
    \expandafter\xdef\csname#1\i\endcsname{\noexpand\csname#2\endcsname{\i}}
    }%
    \endgroup
}
\definecolor{MyBlue}{RGB}{15,60,210}
\definecolor{DarkGreen}{rgb}{0.1,0.7,0.1}
\long\def\Cyril#1{\textcolor{MyBlue}{[Cyril: {#1}]}}
\long\def\Edgar#1{\textcolor{red}{[Edgar: {#1}]}}
\long\def\Francois#1{\textcolor{violet}{[Fran\c{c}ois: {#1}]}}
\long\def\Status#1{\textcolor{violet}{\fbox{\textbf{Status}}~[{#1}]}}
\long\def\Cyril#1{}
\long\def\Edgar#1{}
\long\def\Francois#1{}
\long\def\Status#1{}
\NewDocumentCommand\set{sm}{\IfBooleanTF#1{\{{#2}\}}{\left\{{#2}\right\}}}
\NewDocumentCommand\ceil{sm}{\IfBooleanTF#1{\lceil{#2}\rceil}{\left\lceil{#2}\right\rceil}}
\NewDocumentCommand\floor{sm}{\IfBooleanTF#1{\lfloor{#2}\rfloor}{\left\lfloor{#2}\right\rfloor}}
\NewDocumentCommand\pare{sm}{\IfBooleanTF#1{({#2})}{\left({#2}\right)}}
\NewDocumentCommand\range{smm}{\IfBooleanTF#1{\set*{{#2},\dots,{#3}}}{\set{{#2},\dots,{#3}}}}
\NewDocumentCommand\card{sm}{\IfBooleanTF#1{|{#2}|}{\left|{#2}\right|}}
\NewDocumentCommand\cardV{sm}{\IfBooleanTF#1{|V({#2})|}{\card{V\!\pare{#2}}}}
\def\leq{\leqslant}\def\le{\leq}
\def\geq{\geqslant}\def\ge{\geq}
\newcommand{\cupdot}{\mathbin{\mathaccent\cdot\cup}} 
\DeclareMathOperator{\dist}{dist}
\newenvironment{myfigure}%
  {\begin{figure}[htbp!]\centering}%
  {\end{figure}}
\def\problem#1#2{\expandafter\def\csname #1\endcsname{\textsf{#2}\xspace}}
\newenvironment{myquote}%
  {\begin{quote}\it\begin{tabular}{||p{.85\textwidth}}}
  {\end{tabular}\end{quote}}
\newtheorem*{fact}{Fact}
\def\myparagraph#1{\subparagraph*{\boldmath{#1}}} 
\title{Isometric-Universal Graphs for Trees}
\author{Edgar Baucher}%
{LaBRI, University of Bordeaux, CNRS, France}%
{edgar.baucher@labri.fr}%
{https://orcid.org/0009-0008-6053-6428}{}
\author{Fran\c{c}ois Dross}%
{LaBRI, University of Bordeaux, CNRS, France}%
{francois.dross@u-bordeaux.fr}%
{https://orcid.org/0000-0002-0535-9640}{}
\author{Cyril Gavoille}%
{LaBRI, University of Bordeaux, CNRS, France}%
{gavoille@labri.fr}%
{https://orcid.org/0000-0003-3671-8607}{}
\authorrunning{~}
\keywords{tree, forest, isometric subgraph, universal graph, distance-preserving.}
\let\oldsection\section
\let\oldsubsection\subsection
\def\mysection#1{\oldsection[\texorpdfstring{#1}{}]{\boldmath{#1}}}
\def\mysectionstar#1{\oldsection*{\boldmath{#1}}}
\def\mysubsection#1{\oldsubsection[\texorpdfstring{#1}{}]{\boldmath{#1}}}
\def\mysubsectionstar#1{\oldsubsection*{\boldmath{#1}}}
\def\section{\@ifstar{\mysectionstar}\mysection}
\def\subsection{\@ifstar{\mysubsectionstar}\mysubsection}
\begin{document}

\maketitle

\begin{abstract}
We consider the problem of finding the smallest graph that contains two input trees each with at most $n$ vertices preserving their distances. In other words, we look for an isometric-universal graph with the minimum number of vertices for two given trees. We prove that this problem can be solved in time $O(n^{5/2}\log{n})$. We extend this result to forests instead of trees, and propose an algorithm with running time $O(n^{7/2}\log{n})$. As a key ingredient, we show that a smallest isometric-universal graph of two trees essentially is a tree. Furthermore, we prove that these results cannot be extended. Firstly, we show that deciding whether there exists an isometric-universal graph with $t$ vertices for three forests is NP-complete. Secondly, we show that any smallest isometric-universal graph cannot be a tree for some families of three trees. This latter result has implications for greedy strategies solving the smallest isometric-universal graph problem.
\end{abstract}


\section{Introduction}
\label{sec:intro}

Trees are one of the most natural and basic graph classes that can be studied. Studying problems on trees is thus natural, and is at the crux of the intersection between Graph Theory and Computer Science. Even for a general problem such as ``finding a graph that contains two input trees'', there are a multitude of variants. To cite few of them, there is the case where the final graph is constrained to be a tree, where ``containing'' can be either as a subgraph or as a topological embedding~\cite{GN98}. There is also the more general case where the input is two graphs instead of two trees~\cite{AbuKhzam14,AKBS17,BJK00}, or where the input is composed of more than two trees contained as minors~\cite{Bodini02,GKLPS18}.

In this paper, we consider the problem of constructing a graph containing as isometric subgraphs two given trees (or forests) with the aim of minimizing its number of vertices. An \emph{isometric subgraph} of a graph $G$ is a subgraph $H$ in which the distances between all the vertices of $H$ are preserved in $G$. This is equivalent to the notion of \emph{distance-preserving embedding} used by Winkler~\cite{Winkler83}, and then reworded as \emph{distance-preservers} \cite{Bodwin21,BCE06,CE06} with important applications to spanners and distance labeling schemes.

This basic problem can be extended in many ways. First of all, one can consider a family of graphs as input rather than two trees. This leads to the notion of \emph{universal} graphs, a notion that has been widely studied. More precisely, a graph $\sU$ is \emph{isometric-universal} for a graph family $\sF$ if every graph of $\sF$ appears as an isometric subgraph in $\sU$. In this line of research, $\sU$ and $\sF$ are often constrained. For instance, if $\sU$ and $\sF$ are limited to caterpillars\footnote{A tree whose vertices of degree at least two induce a path.}, Chung, Graham, and Shearer \cite{CGS81} proved that $n$-vertex caterpillars have an isometric-universal caterpillar with $O(n^2/\log{n})$ vertices, which is tight. And, for the family of all $n$-vertex graphs, Esperet, Gavoille, and Groenland~\cite{EGG21a} showed that there exists an isometric-universal graph with at most $3^{n + O(\log^2{n})}$ vertices, and with a lower bound of $2^{(n-1)/2}$.

The notion of isometric subgraph can also be extended to $k$-isometric subgraphs as follows. A subgraph $H$ of $G$ is called $k$-isometric if the distances between all the vertices of $H$ at distance at most $k$ in $G$ are preserved in $H$. This definition generalizes several classical notions of graph containment. Indeed, a subgraph is simply a $0$-isometric subgraph, an induced subgraph is a $1$-isometric subgraph, and a distance-preserving embedding or isometric subgraph is an $\infty$-isometric subgraph. This leads to the notion of \emph{$k$-isometric-universal graphs} for a given family of graphs. Note that sparse universal graphs ($k=0$) and small induced-universal graphs ($k=1$) are well-studied topics in Graph Theory, see for instance \cite{BILOx24,DEGJx21,EJM23} and the references therein. Thus, the initial problem we consider in this paper corresponds to the case where $k=\infty$ and $\sF$ is composed of two trees, and more generally two forests.
%

\subsection{Our contribution} 

On the positive side, we prove that constructing a minimum isometric-universal graph for two forests, each with at most $n$ vertices, can be done in time $O(n^{7/2}\log{n})$. On the negative side, we show that this result cannot be extended to three forests by proving that the associated decision problem is NP-complete.

More precisely, in \cref{sec:2forests}, we demonstrate that a minimum isometric-universal graph for two trees can be constructed in time $O(n^{5/2}\log{n})$ thanks to an algorithm due to Gupta and Nishimura~\cite{GN98}. Additionally, we show how to reduce the problem for two forests from the \ASS Problem, a variant of the maximum matching of maximum weight problem in bipartite graphs, with an $O(n)$ slow-down factor in time complexity. As a key ingredient, we show that two trees always admit a minimum isometric-universal graph that is a tree. We believe that this nice property for two trees is an important technical contribution of the paper.

On the other hand, in \cref{sec:3forests}, we show that these positive results cannot be extended in at least two ways. First, we show that deciding whether there exists an isometric-universal graph with $t$ vertices for three forests is NP-complete. This is achieved by a technical reduction from the \MATCH Problem. Second, we exhibit an infinite family of triples of trees having no minimum isometric-universal graph that is a tree. Implications to approximation algorithms are discussed in \cref{subsec:greedy}. In particular, we demonstrate that no greedy strategy can build a minimum isometric-universal graph, even with access to an optimal oracle for a pair of graphs. Finally, we show that the property used for two trees cannot be extended to minimum $k$-isometric-universal graphs if $k < (n-8)/3$.


%
%

\subsection{Preliminaries and definitions}

We consider finite and undirected graphs with no loops and no multi-edges. The \emph{union graph} of two graphs $X,Y$, denoted by $X\cup Y$, is defined by $V(X\cup Y) = V(X)\cup V(Y)$ and $E(X\cup Y) = E(X)\cup E(Y)$. Similarly, the \emph{intersection graph} of $X,Y$, denoted by $X\cap Y$, is defined by $V(X\cap Y) = V(X)\cap V(Y)$ and $E(X\cap Y) = E(X)\cap E(Y)$. The disjoint union is denoted by $X \cupdot Y$.

Given a graph $G$, and two vertices $u,v$ of $G$, we denote by $\dist_G(u,v)$ the minimum number of edges of a path connecting $u$ to $v$ in $G$, if it exists, and we set $\dist_G(u,v) = \infty$ otherwise, i.e., if $u$ and $v$ are in different connected components of $G$.


A subgraph $H$ of $G$ is \emph{$k$-isometric} if, for all vertices $u,v$ of $H$, $\dist_H(u,v) = \dist_G(u,v)$ if $\dist_G(u,v) \le k$. Note that, in that case, if $\dist_H(u,v) > k$, then $\dist_G(u,v) > k$. In particular, since $H$ is a subgraph of $G$, if $\dist_H(u,v) = k+1$, then $\dist_G(u,v) = k+1$.



A \emph{$k$-isometric-universal graph} for a given graph family $\sF$ is a graph $\sU$ such that every graph $G$ of $\sF$ is isomorphic to a $k$-isometric subgraph of $\sU$. Moreover, $\sU$ is called \emph{minimum} if it has the smallest possible number of vertices. $\sU$ is \emph{minimal} if no proper subgraph of $\sU$ is a $k$-isometric-universal graph for $\sF$. Note that if $\sU$ is minimal, then, for every $k$-isometric embedding $\varphi$ from the graphs of $\sF$ to $\sU$, we have $\sU = \bigcup_{G \in \sF} \varphi(G)$.
Whenever $k = \infty$, we rather use the term \emph{isometric-universal graph} as introduced by Esperet et al. \cite{EGG21a}.


\section{Universal Graph for Two Forests}
\label{sec:2forests}


Let us first remark that computing the minimum isometric-universal graph is hard in general, even for two graphs. More precisely, given two graphs $G,H$, and some number $t$, the $k$-\IUGG Problem asks for finding a $k$-isometric-universal graph with $t$ vertices for $\set*{G,H}$, that is a $t$-vertex graph $\sU$ containing $G$ and $H$ as $k$-isometric subgraphs.

%

Indeed, the problem is NP-complete for each $k\in\mathbb{N}$ and for two graphs with the same number of vertices. For $k=0$, this is simple reduction\footnote{It is easy to check that the problem is in NP, the certificate being composed of the universal graph and of the embedding from each graph to the universal graph, both can be described polynomially in the size of $G,H$.} from \HP, and for $k>0$, this is a reduction from \KP of size $p$. Indeed, for a graph $G$ with $n$ vertices, set $H$ to a path (if $k=0$) or to a clique (for $k>0$) with $n$ vertices, and ask whether the there exists a $k$-isometric-universal graph for the family $\set{G,H}$, with respectively $t = n$ vertices for $k=0$, or with $t = 2n-p$ vertices for $k>0$. 

We will consider the variant where the pair $\set{G,H}$ is composed of two forests. The problem is known to be NP-complete for $k=0$~, even if one of the two forests is a tree~\cite[Theorem~4.6, p.~105]{GJ79}. As shown by Gavoille and Jacques~\cite{GJ24}, this NP-completeness result can be extended to two trees of the same size for $k=1$. Nevertheless, as we will see in this section, the problem becomes polynomial for two forests if $k=\infty$.




\subsection{Connected components reduction}


We start with the following result, that will allow us to reduce the problem of computing the minimum isometric-universal graph for a pair of graphs (possibly non-connected) to the case of their connected components. The reduction is based on a polynomial solution of the \ASS Problem whose goal is to optimally assign agents to individual tasks. For the purpose of the next statement, we say that a function $f$ is \emph{superlinear} if $f(n)/n$ is non-decreasing. Typically, polynomials $n^{1+c}$ and exponentials $2^{cn}$ for $c>0$ are superlinear functions.

\begin{theorem}\label{th:2graphs}
    Let $G,H$ be two graphs with at most $n$ vertices, composed respectively of $s$ and $r$ connected components. Let $f(m)$ be any superlinear function bounding above the time complexity of computing a minimum isometric-universal graph for any pair of graphs with at most $m$ vertices composed of a component of $G$ and of $H$. Then, assuming $s\le r$, 
    a minimum isometric-universal graph for $\set{G,H}$ can be computed in time 
    \[
        O\pare{ r\min\set{f(2n),\, s f(n)} + s \min\set{ r \sqrt{s}\log{n},\, rs + s\log\log{s}} } ~.
    \]
\end{theorem}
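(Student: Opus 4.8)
The plan is to show that a minimum isometric-universal graph for $\set{G,H}$ decomposes along the connected components of $G$ and $H$, and that assembling the optimal decomposition is exactly an instance of the \ASS Problem.

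First I would establish the structural core. Let $\sU$ be any isometric-universal graph for $\set{G,H}$, with isometric embeddings $\varphi_G$ and $\varphi_H$. Since every connected component $G_i$ of $G$ is connected, its image $\varphi_G(G_i)$ lies in a single component of $\sU$; and since $\dist_G(u,v)=\infty$ whenever $u,v$ belong to distinct components of $G$, the embedding forces $\dist_{\sU}(\varphi_G(u),\varphi_G(v))=\infty$, so distinct components of $G$ land in distinct components of $\sU$, and likewise for $H$. Hence each component of $\sU$ receives the image of at most one component of $G$ and at most one component of $H$. If $\sU$ has fewest vertices, then every component of $\sU$ meets the image of some $G_i$ or $H_j$ (otherwise it could be deleted, reducing the vertex count), so each component $C$ either hosts a single pair $\set{G_i,H_j}$ --- in which case, $C$ being an isometric subgraph of $\sU$, it is itself an isometric-universal graph for that pair, so $\card{V(C)}\ge\mu(G_i,H_j)$, where $\mu$ denotes the minimum universal-graph size of a pair --- or it hosts a lone $G_i$ (cost $\card{V(G_i)}$) or a lone $H_j$ (cost $\card{V(H_j)}$). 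The components of $\sU$ carrying both a $G_i$ and an $H_j$ therefore define a matching between the components of $G$ and those of $H$.

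Next I would turn this into an exact optimization. For a pair $(i,j)$ set $\mathrm{sav}(i,j)=\card{V(G_i)}+\card{V(H_j)}-\mu(G_i,H_j)$, which is non-negative because the disjoint union $G_i\cupdot H_j$ is always a valid universal graph of the pair. Summing component sizes, every universal graph has at least $\card{V(G)}+\card{V(H)}-\sum_{(i,j)\in M}\mathrm{sav}(i,j)$ vertices, where $M$ is the induced matching; conversely, for any matching $M$ one constructs a universal graph of exactly that size by taking the disjoint union of a minimum pair-universal graph for each matched pair together with the unmatched components, which is isometric-universal since internal distances are preserved componentwise while cross-component distances stay infinite on both sides. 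Thus the minimum number of vertices equals $\card{V(G)}+\card{V(H)}$ minus the maximum total saving over a matching, i.e. a maximum-weight bipartite matching (an \ASS instance) with the $s$ components of $G$ on one side, the $r\ge s$ components of $H$ on the other, and integer weights $\mathrm{sav}(i,j)\in\set{0,\dots,n}$; reconstructing $\sU$ from an optimal matching is then immediate. For the running time I would bound the two phases separately, producing the two $\min$ expressions. Computing all pairwise values $\mu(G_i,H_j)$ costs $\sum_{i,j}f\pare{\max\set{\card{V(G_i)},\card{V(H_j)}}}$; bounding each term crudely by $f(n)$ gives $sr\,f(n)=r\cdot sf(n)$, whereas refining through superlinearity, $f(x)\le\tfrac{x}{2n}f(2n)$ for $x\le 2n$, gives a sum at most $\tfrac{f(2n)}{2n}\sum_{i,j}\pare{\card{V(G_i)}+\card{V(H_j)}}=\tfrac{f(2n)}{2n}\pare{r\card{V(G)}+s\card{V(H)}}\le\tfrac{r+s}{2}f(2n)\le r f(2n)$ since $s\le r$, so taking the better of the two yields $r\min\set{f(2n),\,sf(n)}$. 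The second phase solves the \ASS instance on $s+r=O(r)$ vertices, $O(sr)$ edges, and weights bounded by $n$; expressing the running times of the relevant weighted-bipartite-matching algorithms in terms of the smaller side $s$, the larger side $r$, and the weight bound $n$ gives $s\cdot r\sqrt{s}\log n$ and $s\pare{rs+s\log\log s}$ respectively, hence the second term.

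The main obstacle I anticipate is not the reduction itself but making the structural step fully rigorous: I must argue that an \emph{optimal} universal graph gains nothing by sharing vertices across the boundary imposed by distinct components --- which is precisely what the preserved infinite distances forbid --- and I must handle minimality carefully, so that the component decomposition is tight (no wasted vertices) and yields an exact equality with the matching optimum rather than a one-sided inequality. Given the superlinearity hypothesis and off-the-shelf \ASS algorithms, the complexity accounting is then routine.
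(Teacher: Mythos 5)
Your proposal is correct and follows essentially the same route as the paper's proof: decompose both graphs into connected components, observe that each component of a minimum universal graph hosts at most one component of $G$ and one of $H$, encode the per-pair savings $\card{V(G_i)}+\card{V(H_j)}-\mu(G_i,H_j)$ as edge weights of a complete bipartite graph, reduce to the \ASS Problem, and split the complexity into the pair-computation phase (bounded by $r\min\set{f(2n),\,sf(n)}$ via superlinearity) and the matching phase. The only deviations are cosmetic and both valid: your direct superlinearity estimate $f(x)\le\tfrac{x}{2n}f(2n)$ summed over all pairs replaces the paper's ordering-of-components argument for the $rf(2n)$ bound, and your two-sided inequality (lower bound from the induced matching of any optimum, upper bound from any matching) sidesteps the paper's explicit handling of maximum-cardinality matchings and of minimum-and-minimal universal graphs.
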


\begin{proof}
    Let $G_1,\dots,G_s$ and $H_1,\dots,H_r$ be the connected components of $G$ and $H$ respectively, where $s\le r$. Let $ n = \max\set{|V(G)|,|V(H)|}$. For convenience, let $I(s,r) = \range{1}{s}\times\range{1}{r}$, and let $X = \range{G_1}{G_s}$ and $Y = \range{H_1}{H_r}$ be the sets of components of $G$ and $H$ respectively. 

    For each $(i,j) \in I(s,r)$, let $S(G_i,H_j)$ be any minimum isometric-universal graph for the (ordered) pair $(G_i,H_j)$.
    We also define the weight function $w(G_i,H_j) = \cardV*{G_i} + \cardV*{H_j} - \cardV*{S(G_i,H_j)}$, corresponding to the number of vertices that $G_i$ and $H_j$ have in common in $S(G_i,H_j)$. We have $w(G_i,H_j) \in \range{1}{n}$. 
    The graph $S(G_i,H_j)$, and thus its weight $w(G_i,H_j)$, can be computed in time at most $f(\max\set{\cardV*{G_i}, \cardV*{H_j}})$  by definition of $f$.

    
    Now, we construct an auxiliary edge-weighted bipartite graph $B = (X \cupdot Y, X \times Y, w)$, the weight of each edge $G_i-H_j$ being set to $w(G_i,H_j)$, for $(G_i,H_j) \in X\times Y$. Note that $B$ is isomorphic to the complete bipartite graph $K_{s,r}$. See \cref{fig:graphB}.

    \begin{myfigure}
    \begin{tikzpicture}


    \def\h{2}
    \node (G1) at (0.5,\h) {$G_1$};
    \node (x1) at (1.5,\h) {$\cdots$};
    \node (Gi) at (2.5,\h) {$G_i$};
    \node (x2) at (3.5,\h) {$\cdots$};
    \node (Gs) at (4.5,\h) {$G_s$};
    
    \def\h{0}
    \node (H1) at (0,\h) {$H_1$};
    \node (H2) at (1,\h) {$H_2$};
    \node (y1) at (2,\h) {$\cdots$};
    \node (Hj) at (3,\h) {$H_j$};
    \node (y2) at (4,\h) {$\cdots$};
    \node (Hr) at (5,\h) {$H_r$};
    
    \graph{ (G1) -- { (H1), (H2), (Hj), (Hr) } };
    \graph{ (Gi) -- { (H1), (H2), (Hj), (Hr) } };
    \graph{ (Gs) -- { (H1), (H2), (Hj), (Hr) } };

    \end{tikzpicture}
    \caption{The edge-weighted complete bipartite graph $B$ defined on the $s$ components of $G$ and the $r$ components of $H$, where the weights are $w(G_i,H_j) = \cardV*{G_i} + \cardV*{H_j} - \cardV*{S(G_i,H_j)}$.}
    \label{fig:graphB}
    \end{myfigure}

    Let $\sK = \set{ S(G_i,H_j) : (i,j)\in I(s,r) }$ be the family of all the graphs $S(G_i,H_j)$ with $(G_i,H_j) \in X\times Y$.
    
    \begin{claim}\label{claim:compute_K_sr}
        The graph family $\sK$, and the graph $B$ with its edge-weights, can be constructed in time $O(r \min\set{f(2n), s f(n)})$. 
     \end{claim}

    
    \begin{proof}
        To construct all the $|X|\cdot|Y| = rs$ graphs of $\sK$, it suffices to consider each edge $G_i-H_j$ of $B$ and to construct the graph $S(G_i,H_j)$. This takes a time at most: 
        \begin{equation*}
            \sum_{(i,j)\in I(s,r)} f\pare{\max\set{\cardV{G_i},\cardV{H_j}}} ~\le~ |I(s,r)| \cdot f(n) = rs\cdot f(n)
        \end{equation*}
        using the fact that $f$ is non-decreasing. Indeed, for $x\le y$, we have, by superlinearity of $f$, that $f(x)/x \le f(y)/y \le f(y)/x$, i.e., $f(x) \le f(y)$. To evaluate this time consumption more accurately, we order the vertices of $B$ as follows. Let $Z_1, Z_2, \dots, Z_{r+s}$ be the sequence of all the connected components of $X\cupdot Y$ (so the vertices of $B$), ordered by their number of vertices in a non-increasing way. So $Z_1$ is the largest component of $X \cupdot Y$, $Z_2$ the largest one taken in $(X\cupdot Y)\setminus Z_1$, $Z_3$ the largest in $(X\cupdot Y)\setminus (Z_1\cup Z_2)$, and so on. Obviously, $\sum_{i=1}^{r+s} \cardV*{Z_i} = \cardV*{G} + \cardV*{H} \le 2n$.

        From this order, all the graphs of $\sK$ can be computed in $r+s$ steps as follows. First, consider $Z_1$, and either compute $S(Z_1,H_t)$ for all $H_t \in Y$ if $Z_1 \in X$, or compute $S(G_t,Z_1)$ for all $G_t \in X$ if $Z_1 \in Y$. Depending on whether $Z_1$ is in $X$ or in $Y$, this takes a time $\sum_{t=1}^r f(\cardV*{Z_1})$ or $\sum_{t=1}^s f(\cardV*{Z_1})$, since $Z_1$ is the largest graphs of $X\cup Y$. This is at most $r \cdot f(\cardV*{Z_1})$ by definition of $r$, and non-decreasing property of $f$. Then, we consider $Z_2$. Note that we do not need to compute either $S(Z_1,Z_2)$ or $S(Z_2,Z_1)$. This is because either $Z_1,Z_2$ are in the same part (in $X$ or in $Y$) (and so the edge $Z_1 - Z_2$ does not exist in $B$), or the edge $Z_1 - Z_2$ does exist in $B$, and thus $S(Z_1,Z_2)$ or $S(Z_2,Z_1)$ has already been computed by the first step. So, one computes either $S(Z_2,H_t)$ for all $H_t \in Y\setminus Z_1$ if $Z_2 \in X$, or $S(G_t,Z_2)$ for all $G_t \in X\setminus Z_1$ if $Z_2 \in Y$. This takes a time at most $r \cdot f(\cardV*{Z_2})$ since $Z_2$ is larger (or equal) than every component in $(X\cupdot Y) \setminus Z_1$.
        
        In general, at the $i$-th step, $Z_i$ is considered, and one computes:
        \begin{itemize}
        
            \item $S(Z_i,H_t)$ for all $H_t \in Y\setminus (Z_1 \cup \cdots \cup Z_{i-1})$ if $Z_i \in X$, or 
        
            \item $S(G_t,Z_i)$ for all $G_t \in X\setminus (Z_1 \cup \cdots \cup Z_{i-1})$ if $Z_i \in Y$.
        
        \end{itemize}
        Step $i$ takes a time at most $r \cdot f(\cardV*{Z_i})$, since $Z_i$ is larger (or equal) than every graph in $(X\cupdot Y) \setminus (Z_1 \cup \cdots \cup Z_{i-1})$, and since $\max\set{|X|,|Y|} \le r$.
        
        By this way, all edges of $B$ has been considered, and the construction time for $\sK$ is at most:
        \begin{align*}            
            \sum_{i=1}^{r+s} r\cdot f\pare{\cardV{Z_i}} ~=~ r \sum_{i=1}^{r+s} f\pare{\cardV{Z_i}} ~.
        \end{align*}
        To conclude, we will use the fact that, for all positive numbers $x_1,\dots,x_t$,
        \[
            \sum_{i=1}^t f(x_i) ~\le~ f\pare{\sum_{i=1}^t x_i} ~.
        \]
        Indeed, let $m = \sum_{i=1}^t x_i$. Then, by superlinearity, $f(x_i)/x_i \le f(m)/m$, because $x_i \le m$ since $x_i > 0$. Therefore, $\sum_{i=1}^t f(x_i) \le \sum_{i=1}^t x_i \cdot f(m)/m = f(m) = f(\sum_{i=1}^t x_i)$.

        In total, the time to construct all the graphs of $\sK$ is at most:
        \begin{align*}            
            r \sum_{i=1}^{r+s} f\pare{\cardV{Z_i}} 
            ~\le~ r \cdot f\pare{ \sum_{i=1}^{r+s} \cardV{Z_i}}
            ~\le~ r \cdot f(2n) ~.
        \end{align*}
        Hence, the time to construct $\sK$, and $B$ with its edge-weights, is at most\\ $O(\min\set{r\cdot f(2n), rs\cdot f(n) }) = O(r\min\set{ f(2n), s f(n)})$ as claimed.
    \end{proof}

    Given $M \subseteq E(B)$, the \emph{weight} of $M$, denoted by $w(M)$, is the sum of its weights, i.e., $w(M) = \sum_{(G_i,H_j) \in E(M)} w(G_i,H_j)$.
    
    \begin{claim}\label{claim:matching}
        Given $\sK$ and any matching $M$ of $B$, one can construct in time $O(n)$ an isometric-universal graph $\sU_M$ for $\set{G,H}$ with $\cardV*{G} + \cardV*{H} - w(M)$ vertices.
    \end{claim}

    \begin{proof}
        The graph $\sU_M$ is composed of $G\cupdot H$ in which all the (ordered) pairs of components $(G_i,H_j) \in M$ are replaced by $S(G_i,H_j)$. Given $\sK$, which can be organized as an array indexed by $(i,j)$, $\sU_M$ can be constructed in time $O(|M| + \cardV*{\sU_M})$ by extracting each graph $S(G_i,H_j)$ from $\sK$ such that $(G_i,H_j)\in M$. As $|M| \le s \le n$, and $\cardV*{\sU_M} \le \cardV*{G} + \cardV*{H} \le 2n$, this is at most $O(n)$.
        
        It is easy to check that $G$ and $H$ are isometric subgraphs of $\sU_M$, since each $G_i$ and $H_j$ are isometric subgraphs of $\sU_M$, and since the distance between any two vertices of different components of $G$ (or of $H$) in $\sU_M$ is $\infty$ as required.

        From the previous construction, the number of vertices of $\sU_M$ is:
        \begin{align*}
            \cardV{\sU_M} &= \cardV{G} + \cardV{H} - \pare{ \sum_{(G_i,H_j)\in M} \cardV{G_i} + \cardV{H_j} } \\
            &+ \pare{ \sum_{(G_i,H_j)\in M} \cardV{S(G_i,H_j)}}\\
            &= \cardV{G} + \cardV{H} - \pare{ \sum_{(G_i,H_j)\in M} \cardV{G_i} + \cardV{H_j} - \cardV{S(G_i,H_j)} }\\
            &= \cardV{G} + \cardV{H} - \sum_{(G_i,H_j)\in M} w(G_i,H_j)\\
            &= \cardV{G} + \cardV{H} - w(M) ~.
        \end{align*}
    \end{proof}

    As suggested by \cref{claim:matching}, minimizing $|V(\sU_M)|$ can be achieved by maximizing $w(M)$. So, it remains to show that a minimum isometric-universal graph for $\set{G,H}$ can be constructed from a maximum matching of maximum weight for $B$, that is a matching of maximum weight among the matchings of maximum size.

    
    For this purpose, consider any isometric-universal graph $\sU$ for $\set{G,H}$, and let $U_1, U_2,\dots$ be its components. We consider any isometric embedding $\varphi$ of $G$ and $H$ in $\sU$, and  we denote by $\hG,\hH$ the corresponding isometric subgraphs of $G$ and $H$ in $\sU$. Similarly, $\hG_i,\hH_j$ denote the isometric subgraphs of the components $G_i,H_j$ embedded in $\sU$ w.r.t. $\varphi$.

    We observe that:
    
    \begin{claim}\label{claim:two_cc}
        Each component of $\sU$ contains at most one component of~$G$ and one component of~$H$.
    \end{claim}
    
    \begin{proof}
        Assume that some component $U_k$ contains two components $\hG_i,\hG_j$ of $G$ (similarly for two components of $H$). Consider $u,v$ be two vertices of $G$ taken respectively in $G_i,G_j$. We have $\dist_G(u,v) = \infty$. But, since $\hG_i,\hG_j \subseteq U_k$, $\dist_{\sU}(u,v) = \dist_{U_k}(u,v) < \infty$. Therefore, $\hG$ is not an isometric subgraph of $\sU$: a contradiction with the choice of $\varphi$.
    \end{proof}

    Now, assume that $\sU$ is minimal, so that $\sU = \hG\cup \hH$. Combined with \cref{claim:two_cc}, for each component $U_k$ of $\sU$, only one of the three following cases occurs:

    \begin{itemize}
        \item Case $C_1$: $U_k = \hG_i \cup \hH_j$,
        \item Case $C_2$: $U_k = \hG_i$ and Case $C_1$ does not occur, or
        \item Case $C_3$: $U_k = \hH_j$ and Case $C_1$ does not occur.
    \end{itemize}
    Since one cannot have $U_k = \hG_i$ and $U_k = \hH_j$ without Case $C_1$, the three cases are indeed disjoint.
    
    Let $D = \set*{ (G_i,H_j) \in X\times Y: \exists k, U_k = \hG_i\cup\hH_j }$ be the set of all pairs of components for which Case $C_1$ holds. We have $D \subseteq E(B)$.
    
    \begin{claim}\label{claim:minimum}
        If $\sU$ is minimum and minimal, then $D$ is a maximum matching of $B$, and $\cardV*{\sU} = \cardV{G} + \cardV{H} - w(D)$.
    \end{claim}

    \begin{proof}
        If $D$ is not a matching, then we have for some $i'\neq i$ or $j'\neq j$:
        $$
          (G_i,H_j),(G_i,H_{j'}) \in D \mbox{ or } (G_i,H_j),(G_{i'},H_j) \in D.
        $$
        It means that $\hG_i,\hH_j,\hH_{j'}$ or $\hG_i,\hG_{i'},\hH_{j}$ are contained in a same component $U_k$ of $\sU$: a contradiction with \cref{claim:two_cc}.
        
        If the matching $D$ is not maximum, then $|D| < s$, since $B$ is (isomorphic to) a complete bipartite graph $K_{s,r}$ with smallest part of size $s$. So we can select from $E(B)$ a pair $(G_i,H_j) \notin D$, i.e., such that Case $C_1$ does not hold. In particular, there must exist two indices $k_i\neq k_j$ for which $U_{k_i} = \hG_i$ (Case $C_2$) and $U_{k_j} = \hH_j$ (Case $C_3$). We can construct a new minimal isometric-universal graph for $\set{G,H}$ obtained from $\sU$ by merging $U_{k_i}$ and $U_{k_j}$ by one vertex, so saving one vertex from $\sU$: a contradiction with the fact that $\sU$ was already minimum.

        Let us calculate the number of vertices of $\sU$. From the classification of $\sU$'s components, three cases occur. For each $i \in\set{1,2,3}$, we denote by $K_i$ the set of $U_k$'s component indices $k$ such that Case $C_i$ occurs.
        
        By case disjunction, we have:
        \[
            \cardV{\sU} = \sum_{k} \cardV{U_k} = \sum_{k \in K_1} \cardV{U_k} + \sum_{k\in K_2} \cardV{U_k} + \sum_{k \in K_3} \cardV{U_k} ~.
        \]
        
        If $k \in K_2$, then $\cardV*{U_k} = \cardV*{\hG_i} = \cardV*{G_i)}$. Similarly, if $k \in K_3$, then $\cardV*{U_k} = \cardV*{\hH_j} = \cardV*{H_j}$. 
        By definition of $D$, if $(G_i,H_j) \in D$, then Case $C_1$ occurs for $\hG_i$ and $\hH_j$, and by case disjunction, there is no index $k$ for which Case $C_2$ occurs for $\hG_i$ or Case $C_3$ occurs for $\hH_j$. Thus, the sum for $k\in K_2\cup K_3$ can be rewritten as:
        
        \begin{align*}
            \sum_{k \in K_2 \cup K_3} \cardV{U_k} &= \sum_{i=1}^s \cardV{G_i} + \sum_{j=1}^r \cardV{H_j} - \sum_{(G_i,H_j) \in D} (\cardV{G_i} + \cardV{H_j}) \\
            &= \cardV{G} + \cardV{H} - \pare{ \sum_{(G_i,H_j) \in D} \cardV{G_i} + \cardV{H_j} } ~.
        \end{align*}
        
        Consider some $k \in K_1$, i.e., $U_k = \hG_i \cup \hH_j$ for some $(i,j)$, which is equivalent to say that $(G_i,H_j) \in D$. The minimality of $\sU$ (in terms of number of vertices) implies that $U_k$ has to be minimum for $\set{G_i,H_j}$, i.e., $\cardV*{U_k} = \cardV*{\hG_i \cup \hH_j} = \cardV*{S(G_i,H_j)}$. Otherwise, one could decrease $\cardV*{\sU}$ by simply replacing $U_k$ by $S(G_i,H_j)$. Thus, the sum for $k\in K_1$ can be rewritten as:
        \[
            \sum_{k \in K_1} \cardV{U_k} = \sum_{(G_i,H_j) \in D} \cardV{S(G_i,H_j)} ~.
        \]

        It follows that:
        \small{
        \begin{align*}
            \cardV{\sU} &= \sum_{ k \in K_2 \cup K_3} \cardV{U_k} + \sum_{(G_i,H_j) \in D} \cardV{S(G_i,H_j)} \\
            &= \cardV{G} + \cardV{H} - \pare{ \sum_{(G_i,H_j) \in D} \cardV{G_i} + \cardV{H_j}} + \sum_{(G_i,H_j) \in D} \cardV{S(G_i,H_j)} \\
            &= \cardV{G} + \cardV{H} - \pare{ \sum_{(G_i,H_j) \in D} \cardV{G_i} + \cardV{H_j} - \cardV{S(G_i,H_j)} } \\
            &= \cardV{G} + \cardV{H} - \sum_{(G_i,H_j) \in D} w(G_i,H_j) \\
            & = \cardV{G} + \cardV{H} - w(D)
        \end{align*}}
        by definition of the weight function $w$.
    \end{proof}

    It is easy to check that the maximum matching $D$, if $\sU$ is minimum and minimal as in \cref{claim:minimum}, is also of maximum weight. Indeed, by way of contradiction, assume $B$ has a maximum matching $M$ with $w(M) > w(D)$. By \cref{claim:minimum}, $\cardV*{\sU} = \cardV*{G} + \cardV*{H} - w(D)$. Consider the universal graph $\sU_M$ as in \cref{claim:matching}. It has $\cardV*{\sU_M} = \cardV*{G} + \cardV*{H} - w(M)$ vertices. So, if $w(M) > w(D)$, then $\cardV*{\sU_M} < \cardV*{\sU}$: a contradiction with the fact that $\sU$ is minimum.

    To summarize, \cref{claim:matching} and \cref{claim:minimum} tell us that the minimum number of vertices for an isometric-universal graph for $\set{G,H}$ is $\cardV*{G} + \cardV*{H} - w(M)$, where $M$ is a maximum matching for $B$ with maximum weight. Moreover, given $\sK$ and $M$, such a graph can be constructed in time $O(n)$ by \cref{claim:matching}. It remains to compute $M$, given the weighted bipartite graph $B$.

    \def\sB{\mathscr{B}} 

    %

    The \ASS Problem asks for a maximum matching of maximum weight in a bipartite graph $\sB$ with integral edge weights. According to~\cite{RT12a}, this problem can be solved in time $O(m \sqrt{t} \log{(t C)})$, where $m$ is the number of edges of $\sB$, $t$ the size of the maximum matching, and $C$ an upper bound on the edge weights. Using the Thorup's technique~\cite{Thorup04b}, in conjunction with~\cite{RT12a}, the computation can also be done in time $O(mt + t^2\log\log{\rho})$, where $\rho$ is the size of the smallest part of $\sB$.

    For our purpose, $\sB = B$, and thus $m = rs$, $t = \rho = s$, and $C = n$, since $B$ is isomorphic to $K_{s,r}$, and since edge-weights are taken in $\range{1}{n}$. Therefore, the computation of a maximum matching $M$ of maximum weight for $B$, given $B$, can be done in time:
    \begin{align*}
        & O\pare{\min\set{ m\sqrt{t}\log{(tC)},\, mt+t^2\log\log{\rho}}} \\
        &= O\pare{ \min\set{ r s^{3/2}\log{n}, \, rs^2 + s^2 \log\log{s}} } \\
        &= O\pare{ s \min\set{ r \sqrt{s}\log{n}, \, rs + s \log\log{s}} }
    \end{align*}

    From \cref{claim:compute_K_sr}, $\sK$, and thus $B$, can be computed in time $O(r\min\set{f(2n), s f(n)})$. Overall, the time to construct $B$, $M$, and $\sU_M$, which is a minimum isometric-universal graph for $\set{G,H}$, is:
    \[
        O\pare{ n + r\min\set{f(2n),\, s f(n)} + s \min\set{ r \sqrt{s}\log{n},\, rs + s \log\log{s}} } ~.
    \]
    The leftmost term ''$n+$'', coming from \cref{claim:matching} and the construction of $\sU_M$, can be omitted from the overall complexity by remarking that, by superlinearity, $f(n)/n \ge f(1)/1$, which implies $f(n) = \Omega(n)$. Indeed, as $f(n)$ is an upper bound on a time complexity, $f(1) > 0$ holds necessarily. Hence, the term $r \min\set{f(2n),s f(n)} = \Omega(n)$.
    
    This completes the proof of \cref{th:2graphs}.
\end{proof}

\subsection{The case of two trees}

To compute the minimum isometric-universal graph of two forests in polynomial time, we will combine \cref{th:2graphs} and the next result about two trees.



\begin{theorem}\label{th:2trees}
  Every minimum and minimal isometric-universal graph for two trees is a tree.
\end{theorem}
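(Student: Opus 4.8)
The plan is to establish the two defining properties of a tree separately: that $\sU$ is connected and that $\sU$ is acyclic. Throughout, fix an isometric embedding $\varphi$ witnessing that $\sU$ is isometric-universal for $\set{T_1,T_2}$, and write $\hT_1,\hT_2$ for the isometric images of $T_1,T_2$. Since $\sU$ is minimal, $\sU=\hT_1\cup\hT_2$. Connectivity would follow from the minimum vertex count: if $\hT_1$ and $\hT_2$ lay in different components, then, both being connected, $\sU$ would have exactly two components sharing no vertex, and identifying one vertex of $\hT_1$ with one vertex of $\hT_2$ would give an isometric-universal graph with one fewer vertex -- distances inside each $\hT_i$ are unaffected by a single cut vertex -- contradicting minimality. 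Hence $\sU$ is connected and $\hT_1,\hT_2$ share at least one vertex.

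The crucial observation is a distance lemma: for any two vertices $u,v$ lying in both $\hT_1$ and $\hT_2$, isometry gives $\dist_{\hT_1}(u,v)=\dist_{\sU}(u,v)=\dist_{\hT_2}(u,v)$; that is, shared vertices have the same distance in both trees. Next I would reduce acyclicity to a statement about the intersection $I:=\hT_1\cap\hT_2$, which, being a subgraph of a tree, is a forest, say with $c$ components. Counting via Euler characteristic, $\cardV*{\sU}-\card*{E(\sU)}=\chi(\hT_1)+\chi(\hT_2)-\chi(I)=1+1-c$, so a connected $\sU$ has cyclomatic number exactly $c-1$. Therefore $\sU$ is a tree if and only if $I$ is connected, and it remains to prove $c=1$.

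Suppose for contradiction that $c\ge 2$. Call a pair $\set{u,v}$ of shared vertices \emph{split} if the unique $u$--$v$ geodesics $P_1\subseteq\hT_1$ and $P_2\subseteq\hT_2$ differ; by the distance lemma they always have the same length. Shared vertices in different components of $I$ form a split pair, since a common geodesic would be a path of shared edges placing $u,v$ in the same component of $I$; so split pairs exist, and I would choose one, $\set{u,v}$, minimizing $d:=\dist_{\sU}(u,v)$. Then $d\ge 2$, as a split pair at distance $1$ would share the edge $uv$. Minimality forces every internal vertex $w$ of $P_1$ or $P_2$ to be private (non-shared): otherwise $\set{u,w}$ and $\set{w,v}$ are shared pairs at distance $<d$, hence non-split, so their $\hT_1$- and $\hT_2$-geodesics coincide and, since $w$ lies on both $u$--$v$ geodesics by additivity of the common distance, concatenate to give $P_1=P_2$, contradicting splitness. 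Consequently $P_1$ and $P_2$ are internally disjoint geodesics of common length $d$ between $u$ and $v$, and $P_1\cup P_2$ is a cycle of $\sU$.

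Finally I would contradict the minimality of $\cardV*{\sU}$ by eliminating the redundancy of this doubled geodesic: re-embed $T_2$ so that the portion mapped onto $P_2$ is instead routed along $P_1$, identifying the $d-1$ private internal vertices of $P_2$ with those of $P_1$ and thereby saving $d-1\ge 1$ vertices. The main obstacle is verifying that this re-embedding still yields an isometric-universal graph: relocating the subtrees of $\hT_2$ hanging off the internal vertices of $P_2$ onto $P_1$ must neither create a shortcut violating the isometry of $\hT_1$ (or of the re-routed $\hT_2$) nor silently destroy an overlap elsewhere, which would offset the saved vertices. I expect this to be the technical heart of the argument, and I would control it using the two structural facts already secured -- $P_1,P_2$ are equal-length geodesics with private interiors, and $d$ is the smallest distance realized by a split pair -- so that any candidate shortcut through a relocated subtree would connect two shared vertices at distance $<d$ having differing geodesics, contradicting the minimal choice of $d$. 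Once this verification is in place, the resulting smaller isometric-universal graph contradicts the minimality of $\sU$, forcing $c=1$ and completing the proof that $\sU$ is a tree.
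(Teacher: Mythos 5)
Your opening moves are correct: the connectivity argument matches the paper's, the Euler-characteristic reformulation (cyclomatic number of $\sU$ equals the number of components of $\hT_1\cap\hT_2$ minus one) is a nice observation the paper does not make, and your extraction of the cycle $P_1\cup P_2$ — two internally private, internally disjoint geodesics of equal length $d\ge 2$ between two shared vertices — recovers exactly the structure of the paper's \cref{claim:uv_P1P2}, via an arguably cleaner device (a split pair of shared vertices minimizing $\dist_{\sU}$, instead of a shortest cycle). But the proof stops precisely where the real work begins. Your final step — identify the private interior of $P_2$ with that of $P_1$, re-attach the hanging subtrees, and check that the result is still isometric-universal — is deferred, not done, and the control mechanism you propose cannot do the job as stated. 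A shortcut created by relocating a subtree of $\hT_2$ that hangs off an interior vertex of $P_2$ connects, in general, a \emph{private} vertex (an interior vertex of $P_1$, which now carries the re-attached subtree) to a shared vertex inside that subtree; such a pair is not a pair of shared vertices, so minimality of $d$ over split pairs of \emph{shared} vertices cannot be invoked against it without a further reduction you do not supply. Concretely, nothing you have established excludes a shared vertex $z$ in a subtree hanging off the $P_2$-neighbor of $u$ whose $\hT_1$-path to $u$ avoids the interior of $P_1$; in that configuration the identification would bring $z$ strictly closer to the interior of $P_1$ than it is in $\hT_1$, breaking isometry. Such configurations are in fact impossible, but what excludes them is tree geometry — additivity of distances along tree paths and the fact that the three paths joining three vertices of a tree meet in a single vertex — not the minimality of $d$.

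That tree-geometric verification is exactly the content, and the bulk, of the paper's proof. The paper is also more conservative than you are: it contracts a single pair of vertices, the neighbors $w_1,w_2$ of $u$ on $P_1$ and $P_2$, which already saves one vertex and contradicts minimality; even for this single contraction, the verification requires \cref{claim:P'} (any shortened pair $x,y$ yields a path of length at most $d+1$ through $w_1-u-w_2$), the three-paths fact, and a chain of distance identities culminating in \cref{claim:abz}, namely $d(w_1,a)+d(a,z)\le d(w_2,b)+d(b,z)$ for auxiliary vertices $a,b,z$, two of which are not shared vertices. Your identification of all $d-1$ interior pairs would need an argument at least as delicate. So the proposal sets up the right configuration and then names, without carrying out, the step that constitutes the theorem's technical heart; as written it is not a proof.
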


Before presenting its technical proof, we remark that both properties, \emph{minimum} and \emph{minimal}, are required. 
Indeed, the disjoint union of any pair of trees is isometric-universal for this pair, is minimal but not minimum (as implied by \cref{th:2trees}, the minimum one must be connected). Moreover, the graph composed of two pendant vertices attached to some vertex of a $C_4$ (see \cref{fig:minimal-minimum}) is a minimum isometric-universal graph for the family $\set*{K_{1,4}, K^+_{1,3}}$ composed of two $5$-vertex trees, where $K^+_{1,p}$ denotes here the graph $K_{1,p}$ where one edge has been subdivided into two edges. However, it is not minimal: it contains $K^+_{1,4}$ as proper subgraph which is isometric-universal for $\set*{K_{1,4}, K^+_{1,3}}$. This example can be generalized to two trees with $n$ vertices, for every $n\ge 5$.

\begin{myfigure}
\begin{tikzpicture}

    \node[text width=3cm] at (1.75,0.65) {$e$};

    \tikzset{every node/.style={vertex}}
    \node (1) at (0, 0) {};
    \node (2) at (1,+1) {};
    \node (3) at (1,-1) {};
    \node (4) at (2, 0) {};
    \node (5) at (3,+1) {};
    \node (6) at (3,-1) {};
    
    \graph{ (4) -- { (2), (3), (5), (6) } };
    \graph{ (1) -- { (2), (3) } };

\end{tikzpicture}
\caption{A minimum isometric-universal graph for the $5$-vertex tree family $\set*{K_{1,4}, K^+_{1,3}}$. From \cref{th:2trees} it is not minimal (the edge $e$ can be removed).}
\label{fig:minimal-minimum}
\end{myfigure}

\begin{proof}[Proof of \cref{th:2trees}]
  Let $\sU$ be any minimum and minimal isometric-universal graph for $\sF = \set{T_1, T_2}$ composed of two trees. We consider a given isometric embedding of $T_1$ and $T_2$ in $\sU$, and denote by $A_1$ and $A_2$ their corresponding embedded subgraphs in $\sU$. Since $\sU$ is minimal, $\sU = A_1 \cup A_2$. So, each vertex or edge of $\sU$, belongs either to $A_1\setminus A_2$, $A_2\setminus A_1$, or to $A_1\cap A_2$.

  For convenience, let us denote by $d(x,y) = \dist_{\sU}(x,y)$, and by $|P|$ the \emph{length} of path $P$, i.e., its number of edges. We will often use the property that every path of $P$ of $A_i$, for some $i\in\set{1,2}$, satisfies $|P| = \dist_{A_i}(u,v) = d(u,v)$, where $u,v$ are the extremities of $P$. This is because every path of a tree is a shortest path in this tree, and because $A_i$ is an isometric subgraph of $\sU$.

  \definecolor{myred}{rgb}{0.9,0,0}
  \definecolor{mygreen}{rgb}{0,0.6,0}
  \definecolor{mygray}{rgb}{0.8,0.8,0.8}

  $\sU$ must be connected, otherwise we could merge two connected components of $\sU$ by a vertex, and this without modifying any distance in $A_1$ or $A_2$ which are finite. This would contradicting the fact that $\sU$ is minimum. If $\sU$ is a tree, we are done. It remains to show that if $\sU$ has a cycle, then $\sU$ is not minimum. Consider the shortest cycle $C$ in $\sU$. Let us show that:

  \begin{claim}\label{claim:uv_P1P2}
      The edges of $C$ can be partitioned into two paths $P_1$ and $P_2$ such that (see \cref{fig:cyclereduccontrac} on the left):
      \begin{itemize}
          \item $P_1 \cap P_2 = \set{u,v}$ with $u,v \in A_1\cap A_2$
          \item $P_1\setminus P_2 \subset A_1\setminus A_2$
          \item $P_2\setminus P_1 \subset A_2\setminus A_1$
      \end{itemize}
  \end{claim}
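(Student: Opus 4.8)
The plan is to exploit two facts about a shortest cycle $C$, whose length I denote by $g$. First, since $A_1$ and $A_2$ are trees, they are acyclic, so $C$ is contained in neither. Second, I will use the standard girth property: for any two vertices $x,y$ on $C$, the distance $d(x,y)$ equals the length of the shorter of the two arcs of $C$ between them. This holds because a strictly shorter $x$--$y$ path in $\sU$ would close up with the shorter arc into a cycle of length less than $g$. An immediate consequence, used repeatedly, is that any arc of $C$ lying entirely inside one tree $A_i$ is the unique $A_i$-path between its endpoints, hence a geodesic of length at most $g/2$.

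Next I would set up the decomposition of $C$. Every edge of $C$ lies in $A_1\cup A_2=\sU$; colour each edge according to whether it belongs to $A_1$, or only to $A_2\setminus A_1$. Since $A_1$ is acyclic, $C\not\subseteq A_1$, so at least one edge is of the second kind; since $A_2$ is acyclic, at least one edge is of the first kind. Reading around $C$, the maximal runs of $A_1$-edges alternate with the maximal runs of $A_2\setminus A_1$-edges (the \emph{gaps}), giving some number $m\ge 1$ of runs and $m$ gaps. The endpoint of every run and gap is incident both to an $A_1$-edge and to an $A_2\setminus A_1$-edge, hence lies in $A_1\cap A_2$.

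The heart of the argument, and the step I expect to be the main obstacle, is to show $m=1$. For a gap $G$ with endpoints $x,y\in A_1\cap A_2$, the unique $A_1$-path $\tau$ between $x$ and $y$ exists and, by isometry of $A_1$, has length $\dist_{A_1}(x,y)=d(x,y)=|G|$; but $\tau\neq G$, since the edges of $G$ do not lie in $A_1$. Two distinct geodesics between $x$ and $y$ yield a cycle of length at most $2|G|$, so the girth gives $g\le 2|G|\le g$, whence every gap has length exactly $g/2$ (and $g$ is even, as $|G|\ge 2$). Since the gaps are vertex-disjoint and separated by runs of positive length, two or more gaps would force the total length of $C$ to exceed $g$; hence $m=1$. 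Thus $C$ splits into a single arc $P_1$ of $A_1$-edges and a single arc $P_2$ of $A_2\setminus A_1$-edges, meeting exactly at two vertices $u,v\in A_1\cap A_2$, each arc having length $g/2\ge 2$ (cf.\ the left of \cref{fig:cyclereduccontrac}).

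Finally I would check that the interiors carry the claimed colours. If an interior vertex $x$ of $P_1$ were in $A_2$, then, locating $x$ along $P_1$ and applying the girth property, I get $\dist_{A_2}(u,x)+\dist_{A_2}(x,v)=g/2=\dist_{A_2}(u,v)$, so $x$ would lie on the unique $A_2$-geodesic from $u$ to $v$, namely $P_2$ --- contradicting that $P_1$ and $P_2$ meet only at $u,v$. Hence no interior vertex of $P_1$ lies in $A_2$; and since $u,v$ are then the only vertices of $P_1$ in $A_2$ and are non-adjacent (as $|P_1|\ge 2$), no edge of $P_1$ is shared, giving $P_1\setminus P_2\subseteq A_1\setminus A_2$. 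The symmetric argument, exchanging the roles of $A_1$ and $A_2$, yields $P_2\setminus P_1\subseteq A_2\setminus A_1$, which completes the proof of the claim.
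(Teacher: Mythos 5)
Your proof is correct, but it follows a genuinely different route from the paper's. The paper anchors everything on the vertex set $A_1\cap A_2$: it first argues that $C$ must meet at least two connected components of $A_1\cap A_2$, then chooses $u,v$ on $C$ in different components so as to minimize $\dist_C(u,v)$, takes $P_1$ to be an arc realizing that distance (whose interior then avoids $A_1\cap A_2$ by the extremal choice, so it lies in one tree, w.l.o.g.\ $A_1$), and finally controls the complementary arc $P_2$ by playing the $A_2$-path between $u$ and $v$ against the minimality of $C$ and of the pair $(u,v)$. You instead $2$-color the edges of $C$ (in $A_1$ versus in $A_2\setminus A_1$), note that the boundary vertices of the maximal runs lie in $A_1\cap A_2$, show that every gap is a geodesic of length exactly $g/2$ (its $A_2$-arc against the unique $A_1$-path between its endpoints, plus the girth bound), and let a length count force exactly one run and one gap; the interiors are then cleaned up with the girth property and uniqueness of tree paths. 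Both arguments rest on the same two facts --- an arc of $C$ inside a tree $A_i$ is a geodesic, and two distinct geodesics between the same endpoints close up into a cycle no longer than their combined length, which the shortest cycle forbids --- but your edge-coloring and counting scheme avoids both the component analysis and the extremal choice of $(u,v)$, at the price of the gap-length computation, while the paper's extremal choice makes the first arc clean essentially for free and then has to work harder on the second arc. One presentational nit: in your chain $\dist_{A_1}(x,y)=d(x,y)=|G|$, the second equality comes from your stated consequence applied to $A_2$ (the gap is an arc of $C$ lying inside the tree $A_2$), not from the isometry of $A_1$; it is worth saying so explicitly, since otherwise a reader may wonder why the gap, rather than its complementary arc, realizes the distance.
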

   
  \begin{proof}
      Consider the connected components of 
      $A_1 \cap A_2$. Obviously, such components are trees.
      In particular, $C \subset A_1\cap A_2$ is impossible. It is also clear that $C$ intersects at least two such components. Indeed, first, if it intersects no components, then $C$ is wholly contained in $A_1$ or in $A_2$, contradicting they are trees. Secondly, if it intersects only one, then $C$ is wholly contained in $A_1$ or in $A_2$ as well, since the part of $C$ outside the component exists and belongs to one tree, and the part inside the component belongs to both trees.

      Define $u$ and $v$ as any two vertices of $C$ that belong to two different components of $A_1\cap A_2$, and such that $\dist_C(u,v)$ is minimum. Let $P_1$ be a path of $C$ realizing this distance. By construction, the only vertices of $P_1$ that are in $A_1 \cap A_2$ are precisely its extremities $u,v$. Indeed, if $P_1\setminus\set{u,v}$ contains a vertex $w$ in $A_1\cap A_2$, then both $\dist_C(u,w)$ and $\dist_C(v,w)$ would be less than $\dist_C(u,v)$, which is not possible by the choice of $u,v \in C$. Thus, $P_1\setminus\set{u,v}$ contains only vertices and edges of a same tree $A_i$, because if two consecutive edges belong to different trees, their common vertex must be in the intersection. W.l.o.g., assume that $P_1 \subset A_1$. Note that we have $\dist_C(u,v) = d_{A_1}(u,v) = d(u,v) = |P_1|$ because $P_1$ is a path of $A_1$. We set $P_2$ to be the path of $C$ induced by the edges of $C$ not in $P_1$.

      Clearly, the first two points hold: $P_1 \cap P_2 = \set{u,v}$ and $u,v \in A_1\cap A_2$, and we have seen that $P_1\setminus\set{u,v} \subset A_1\setminus A_2$, and thus $P_1\setminus P_2 \subset A_1\setminus A_2$.
          
      For the third point, by way of contradiction, assume that $P_2\setminus \set{u,v}$ contains a vertex of $w \in A_1\cap A_2$. We have $|P_1| = \dist_{A_1}(u,v) = d(u,v)$. By the choice of $u$ and $v$ on $C$, we must have $|P_1| + |P_2| \ge 2d(u,v)$, i.e., $|P_2| \ge d(u,v)$. On the other hand, let us consider the path $P$ between $u$ and $v$ in $A_2$. We have $|P| = \dist_{A_2}(u,v) = d(u,v)$. So, $P_1 \cup P$ creates a cycle of length at most $2d(u,v)$. By minimality of $C$, this implies that $|P_2| = d(u,v)$. But, if $w\in P_2\setminus \set{u,v}$, then both $\dist_C(u,w)$ and $\dist_C(v,w)$ would be less than $|P_2| = d(u,v) = \dist_C(u,v)$: a contradiction with the choice of $u,v \in A_1\cap A_2$ that minimizes the distance on $C$. So, $P_2\setminus P_1$ has no vertex in $A_1\cap A_2$. Thus, $P_2\setminus P_1$ contains only vertices and edges of a same tree $A_i$. Clearly, $P_2\setminus P_1 \subset A_1$ is not possible, since otherwise $C$ would be wholly contained in $A_1$. So, $P_2\setminus P_1 \subset A_2\setminus A_1$ as required.
  \end{proof}

  Let $u,v,P_1,P_2$ be as in \cref{claim:uv_P1P2}. Note that $d(u,v) > 1$. Indeed, we have seen that the size of $C$ is $|P_1| + |P_2| = 2d(u,v)$, and $d(u,v) = 1$ is not possible since $C$ is a cycle. Then, we can define $w_i$ to be the neighbor of $u$ in $P_i$, for each $i\in\set{1,2}$. See \cref{fig:cyclereduccontrac} on the left. And, consider the graph $\sU'$ obtained from $\sU$ by identifying $w_1$ and $w_2$ into a new vertex $w$, and by removing double edges created. We call this operation on $\sU$ a \emph{contraction}, since it corresponds to a classical edge-contraction if $w_1$ and $w_2$ are adjacent. See \cref{fig:cyclereduccontrac}.

  \begin{myfigure}
  \begin{tabular}{ccc}
      \begin{tikzpicture}[scale=0.76]
          
      \node at (3,1.2) {\color{mygreen}Tree $A_1$};
      \node at (3,-1.2) {\color{myred}Tree $A_2$};
      \node[mygreen] at (5.5,+1.2) {$P_1$};
      \node[myred] at (5.5,-1.2) {$P_2$};
      \node at (3,0) {$C$};

        \tikzset{every node/.style={vertex,fill=none, very thick}}
          \node (u) at (0,0) {};
          \node (v) at (6,0) {};

          \tikzset{every node/.append style={draw=mygreen}}    
          \node (w1) at (1,1) {};
          \node (w2) at (2,1.5) {};
          \node (w3) at (5,1) {};

          \tikzset{every node/.append style={draw=myred}}
          \node (ww1) at ($(w1)+(0,-2)$) {};
          \node (ww2) at ($(w2)+(0,-3)$) {};
          \node (ww3) at ($(w3)+(0,-2)$) {};
          
          \tikzset{every path/.style={draw=mygreen,semithick}}
          \draw (u) to [bend left=10] (w1) ;
          \draw (w1) to [bend left=10] (w2) ;
          \draw[dashed] (w2) to [bend left=20] (w3);
          \draw (w3) to [bend left=10] (v);
          
          \tikzset{every path/.style={draw=myred,semithick}}
          \draw (u) to [bend right=10] (ww1) ;
          \draw (ww1) to [bend right=10] (ww2) ;
          \draw[dashed] (ww2) to [bend right=20] (ww3);
          \draw (ww3) to [bend right=10] (v);
          
          \tikzset{every node/.style={draw=none,inner sep=3pt}}
          \node at (u) [left=2pt] {$u$};
          \node at (v) [right=2pt] {$v$};
          \node at (w1) [above left] {$w_1$};
          \node at (ww1) [below left] {$w_2$};
      \end{tikzpicture}
      &
      \begin{tikzpicture}[scale=0.76]
          \node at (0,0) {};
          \draw[-Triangle, very thick](0,1.5) -- (1, 1.5);
      \end{tikzpicture}
      &
      \begin{tikzpicture}[scale=0.76]
      
      \node at (3,1.2) {\color{mygreen}$A_1$};
      \node at (3,-1.2) {\color{myred}$A_2$};
      
      \tikzset{every node/.style={vertex,fill=none,very thick}}
      \node (u) at (0,0) {};
      \node (v) at (6,0) {};
      \node (w1) at (1,0) {};

      \tikzset{every node/.append style={draw=mygreen}}
      \node (w2) at (2,1.5) {};
      \node (w3) at (5,1) {};
      
      \tikzset{every node/.append style={draw=myred}}        
      \node (ww2) at ($(w2)+(0,-3)$) {};
      \node (ww3) at ($(w3)+(0,-2)$) {};
          
      \tikzset{every path/.style={draw=mygreen,semithick}}
      \draw (w1) to [bend left=12] (w2) ;
      \draw[dashed] (w2) to [bend left=20] (w3);
      \draw (w3) to [bend left=10] (v);
          
      \tikzset{every path/.style={draw=myred,semithick}}
      \draw (w1) to [bend right=12] (ww2) ;
      \draw[dashed] (ww2) to [bend right=20] (ww3);
      \draw (ww3) to [bend right=10] (v);

      \tikzset{every path/.style={thick}}
      \draw (u) to (w1);
          
      \tikzset{every node/.style={draw=none,inner sep=3pt}}
      \node at (u) [left=2pt] {$u$};
      \node at (v) [right=2pt] {$v$};
      \node at (w1) [right=2pt] {$w$};
      \node at (w1) [above left, color=gray] {$w_1$};
      \node at (w1) [below left, color=gray] {$w_2$};
      \end{tikzpicture}
      \end{tabular}
      \caption{Cycle $C = P_1 \cup P_2$. Contracting $w_1$ and $w_2$ in $\sU$. Green vertices and edges are in $A_1\setminus A_2$, red vertices and edges are in $A_2\setminus A_1$, and black vertices and edges are in $A_1\cap A_2$.}
      \label{fig:cyclereduccontrac}
  \end{myfigure}

  We will show that $\sU'$ is still isometric-universal for $\sF$, contradicting the fact that $\sU$ has the minimum number of vertices. This will prove that $C$ does not exist in $\sU$, and so $\sU$ is a tree.

  Let us show that $A_1$ and $A_2$ are isometric subgraphs of $\sU'$. We have to prove that the contraction of $w_1$ and $w_2$ does not change any distance between any pair of vertices that belong to a same tree $A_i$, for each $i \in\set{1,2}$.

  First we remark that $A_i$ is a subgraph of $\sU'$. Indeed, $w_1$ and $w_2$ belong to different trees, and thus the edge $u-w_i$ of $A_i$ in $\sU$, after contraction, is in $A_1\cap A_2$ in $\sU'$. So, the distances between vertices taken in $A_i$ can only decrease in $\sU'$. 

  We proceed by way of contradiction. Assume there are two vertices $x$ and $y$ at distance $d$ in $\sU$ that both belong to $A_i$, for some $i$, and such that $\dist_{\sU'}(x,y) < d$. We consider only the case $A_i = A_1$, the case $A_i = A_2$ being symmetric.

  Let us show that:

  \begin{claim}\label{claim:P'}
  There exists a path $P$ between $x$ and $y$ in $\sU$ of length $|P| \le d+1$ that passes through the path $w_1 - u - w_2$ or the path $w_2 - u - w_1$. 
  \end{claim}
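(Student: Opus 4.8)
The plan is to exploit the fact that $\sU'$ and $\sU$ differ only at the contracted vertex $w$: every edge of $\sU'$ not incident to $w$ is an edge of $\sU$ not incident to $w_1$ or $w_2$, and conversely. So I would start from a shortest path $Q$ between $x$ and $y$ in $\sU'$, whose length is $\dist_{\sU'}(x,y) < d$ by the contradiction hypothesis. First I would observe that $Q$ must pass through $w$: otherwise $Q$ avoids $w_1$ and $w_2$ altogether and is already a path of $\sU$ of length $<d$, contradicting $d(x,y)=d$. Let $a$ and $b$ be the neighbours of $w$ on $Q$ (if $w$ is an endpoint of $Q$, one of them is absent and the same reasoning applies).

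Next I would lift $Q$ back to $\sU$ by replacing the single vertex $w$ with a detour through $w_1$ and $w_2$. The edge $a-w$ of $\sU'$ comes from an edge $a-w_1$ or $a-w_2$ of $\sU$, and likewise for $w-b$. The key dichotomy is: if some $w_i$ is incident in $\sU$ to both $a$ and $b$, then replacing $w$ by this single $w_i$ turns $Q$ into a walk of $\sU$ of length $\dist_{\sU'}(x,y)<d$, again contradicting $d(x,y)=d$. Hence no such $w_i$ exists, which forces $a$ to be adjacent in $\sU$ to exactly one of $w_1,w_2$ and $b$ to the other; up to renaming, $a-w_1$ and $w_2-b$ are edges of $\sU$. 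Since $w_1$ and $w_2$ are both neighbours of $u$ in $\sU$ (being the neighbours of $u$ on $P_1$ and $P_2$), I can bridge them by the length-$2$ path $w_1-u-w_2$, obtaining a walk $P$ from $x$ to $y$ in $\sU$ that replaces the two edges $a-w-b$ of $Q$ by the four edges $a-w_1-u-w_2-b$. This $P$ passes through $w_1-u-w_2$ (or $w_2-u-w_1$ after the renaming) and has length $\dist_{\sU'}(x,y)+2 \le (d-1)+2 = d+1$.

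Finally I would check that $P$ is a genuine (simple) path, which is the only delicate point. The inserted vertices $w_1,w_2$ do not appear on $Q$ (they were merged into $w$), so the only possible repetition is at $u$. But $u$ is adjacent to $w$ in $\sU'$ (the edge $u-w_1$ becomes $u-w$ after contraction), so if $u$ lay on the shortest path $Q$ it would have to be consecutive with $w$, i.e. $u\in\{a,b\}$; and this is impossible, since if, say, $a=u$ then $a$ is adjacent to $w_2$ in $\sU$, so $a-w_2-b$ is a single-$w_i$ shortcut of length $\dist_{\sU'}(x,y)<d$, the forbidden case ruled out above. Hence $u\notin Q$ and $P$ is simple.

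The main obstacle is precisely this length-and-simplicity bookkeeping: the whole argument rests on the observation that $w_1$ and $w_2$ share the common neighbour $u$, so the detour costs exactly two extra edges (giving the bound $d+1$ rather than something larger), and on ruling out that $u$ already lies on $Q$, which would otherwise either break simplicity or yield a path of $\sU$ shorter than $d$.
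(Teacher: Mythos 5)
Your proof is correct and takes essentially the same route as the paper: lift a shortest $x$--$y$ path of $\sU'$ back to $\sU$ by expanding $w$ into $w_1 - u - w_2$ (a $+2$ detour), using the shortest-path property to ensure $u$ does not already lie on that path. If anything, your bookkeeping is slightly more careful than the paper's: in the cases needing no detour (the path avoids $w$, or both neighbours of $w$ attach to the same $w_i$ in $\sU$) you explicitly derive a contradiction with $\dist_{\sU}(x,y)=d$, whereas the paper merely constructs there a lift of length $\le d+1$ that does not actually pass through $w_1 - u - w_2$, leaving implicit that these cases cannot occur.
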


  \begin{proof}
  For every path $Q$ in $\sU$, we denote by $\hQ$, its \emph{contraction}, i.e., the path obtained by contracting all the edges of the subpath of $Q$ between $w_1$ and $w_2$. If $w_1$ or $w_2$ is not in $Q$, we set $\hQ = Q$. Note that $\hQ$ is a path of $\sU'$.

  Let $P'$ be a shortest path in $\sU'$ between $x$ and $y$. By hypothesis on $x,y$, $|\hP| < d$. Let us show that there exists a path $P$ in $\sU$ such that $P' = \hP$ and $|P| \le d+1$. If $P'$ does not contain $w$, then $P = P' = \hP$. Otherwise, $w \in P'$. There are two cases: (1) if $P'$ contain $uw$, then we can set $P$ as the path $P'$ where $w$ is replaced $w_1$ or $w_2$ (depending on the other edge incident to $w$ in $P'$); (2) if $P'$ does not contain $uw$, then since $P'$ is a shortest path, it cannot contain $u$. We can set $P$ as the path $P'$ where $w$ is replaced by the path $w_1 - u - w_2$ or $w_2 - u - w_1$. We check that in both cases, $\hP = P'$, and $|P| \le |\hQ| + 2 < d + 2$, that implies $|P| \le d+1$.
  %
  %
  \end{proof}

  Let $P$ be a path as in \cref{claim:P'}. Note that $P$ is a witness for $\dist_{\sU'}(x,y) < d$, i.e., a path connecting $x$ to $y$ and such that after contracting $w_1$ and $w_2$, its length decreases by $2$ in $\sU'$, and becomes less than $d$. W.l.o.g. we will assume that $P$ visits $w_1 - u - w_2$ in this order when going from $x$ to $y$. If not, we can exchange the role of $x$ and $y$.

  Now, let us define the vertices $a,b,z$ as follows (see \cref{fig:cyclereduccontracnot}):

  \begin{itemize}

  \item Vertex $b$ is the last vertex of $P$, when going from $u$ to $y$, that belongs to $P_2$. Note that $b$ is between $w_2$ and $y$ on $P$, possibly with $b = w_2$. See \cref{fig:cyclereduccontracnot} on the left.

  \item Vertex $z$ is the first vertex of $P$, when going from $b$ to $y$, that belongs to $A_1$, possibly with $z = y$. Note that $z \neq b$. See \cref{fig:cyclereduccontracnot} on the right.

  \item Vertex $a$ is the last vertex of the path between $u$ to $z$ in $A_1$, when going from $u$ to $z$, that belongs to $P_1$, possibly with $a \in\set{u,w_1,v}$. See \cref{fig:cyclereduccontracnot} on the right.

  \end{itemize}

  \begin{myfigure}
  \begin{tabular}{cc}
      \begin{tikzpicture}[scale=0.76]
      
      \node[text width=3cm] at (4,1.2) {\color{mygreen}$A_1$};
      \node[text width=3cm] at (4,-1.2) {\color{myred}$A_2$};
      \node[thick,rounded corners=.1cm, fill=mygray] at (2,2.3) {$P$};

      \tikzset{every node/.style={draw=mygray,circle,fill=mygray,inner xsep=2\pgflinewidth}}
      \node at (0,0) {}; 
      \node at (1,-2) {}; 
      \node at (4.5,2.5) {}; 
      \node at (0,-2.3) {}; 
      \node at (1,1) {}; 
      \node at ($(1,1)+(0,-2)$) {}; 
      \node at (3.5,1.7) {}; 
      \node at (3.5,-1.7) {}; 

      \tikzset{every node/.style={draw=black,vertex,fill=none,very thick}}
      \node (u) at (0,0) {};
      \node (v) at (6,0) {};
      \node (z) at (1,-2) {};

      \tikzset{every node/.append style={draw=mygreen}}
      \node (w1) at (1,1) {};
      \node (w1p) at (5,1) {};
      \node (a) at (3.5,1.7) {};
      \node (x) at (4.5,2.5) {};
      \node (y) at (0,-2.3) {};
          
      \tikzset{every node/.append style={draw=myred}}
      \node (w2) at ($(w1)+(0,-2)$) {};
      \node (w2p) at ($(w1p)+(0,-2)$) {};
      \node (b) at ($(a)+(0,-3.4)$) {};

      \tikzset{every path/.style={draw=mygray,ultra thick,-latex,double=mygray,double distance=2\pgflinewidth,-}}
      \draw (u)  to [bend left=10]  (w1) ;
      \draw (w1) to [bend left=20]  (a);
      \draw (a)  to [bend left=10]  (x);
      \draw (z)  to [bend left=10]  (y);
      \draw (u)  to [bend right=10] (w2) ;
      \draw (w2) to [bend right=20] (b) ;
      \draw (b)  to [bend left=10]  (z);

      \tikzset{every path/.style={draw=mygreen,semithick}}
      \draw[dashed] (a) to [bend left=10] (x);
      \draw[dashed] (w1) to [bend left=20] (a);
      \draw (u) to [bend left=10] (w1) ;
      \draw[dashed](z) to [bend left=10] (y);
      
      \tikzset{every path/.style={draw=myred,semithick}}
      \draw (u) to [bend right=10] (w2) ;
      \draw[dashed] (w2) to [bend right=20] (b) ;
      \draw[dashed] (b) to [bend left=10] (z);
              
      \tikzset{every path/.style={draw=mygreen,semithick}}
      \draw[dashed] (a) to [bend left=10] (w1p);
      \draw (w1p) to [bend left=10] (v);
          
      \tikzset{every path/.style={draw=myred,semithick}}
      \draw[dashed] (b) to [bend right=10] (w2p);
      \draw (w2p) to [bend right=10] (v);

      \tikzset{every node/.style={draw=none}}
      \node at (u) [left=2pt] {$u$};
      \node at (v) [right=2pt] {$v$};
      \node at (w1) [above left] {$w_1$};
      \node at (w2) [below left] {$w_2$};
      \node at (x) [right=2pt] {$x$};
      \node at (b) [below right] {$b$};
      \node at (y) [left=2pt] {$y$};
      \end{tikzpicture}
      &
      \begin{tikzpicture}[scale=0.76]
      
      \node[text width=3cm] at (4.5,1.2) {\color{mygreen}$A_1$};
      \node[text width=3cm] at (4,-1.2) {\color{myred}$A_2$};
      
      \tikzset{every node/.style={vertex,fill=none,very thick}}
      \node (u) at (0,0) {};
      \node (v) at (6,0) {};
      \node (z) at (1.8,-2) {};

      \tikzset{every node/.append style={draw=mygreen}}
      \node (w1) at (1,1) {};
      \node (w1p) at (5,1) {};
      \node (a) at (2.5,1.7) {};
      \node (y) at (0,-2.3) {};
      \node (zm) at (0.9,-2.1) {};
          
      \tikzset{every node/.append style={draw=myred}}
      \node (w2) at ($(w1)+(0,-2)$) {};
      \node (w2p) at ($(w1p)+(0,-2)$) {};
      \node (b) at (3.5,-1.7) {};
          
      \tikzset{every path/.style={draw=mygreen,semithick}}
      \draw (u) to [bend left=10] (w1) ;
      \draw[dashed] (w1) to [bend left=10] (a);
      \draw[dashed] (a) to [bend left=20] (w1p);
      \draw (w1p) to [bend left=10] (v);
      \draw[dashed](z) to [bend left=10] (zm);
      \draw[dashed](zm) to [bend left=10] (y);
      \draw[dashed](a) to [bend right=80, looseness=2.3] (zm);
         
      \tikzset{every path/.style={draw=myred,semithick}}
      \draw (u) to [bend right=10] (w2) ;
      \draw[dashed] (w2) to [bend right=20] (b) ;
      \draw[dashed] (b) to [bend right=10] (w2p);
      \draw (w2p) to [bend right=10] (v);
      \draw[dashed] (b) to [bend left=10] (z);

      \tikzset{every node/.style={draw=none}}
      \node at (u) [left=2pt] {$u$};
      \node at (v) [right=2pt] {$v$};
      \node at (w1) [above left] {$w_1$};
      \node at (w2) [below left] {$w_2$};
      \node at (b) [below right] {$b$};
      \node at (y) [left=2pt] {$y$};
      \node at (a) [above right] {$a$};
      \node at (z) [below right] {$z$};
      \end{tikzpicture}
      \end{tabular}
      \caption{Path $P$ between $x$ and $y$, and vertices $a,b,z$.}
      \label{fig:cyclereduccontracnot}
  \end{myfigure}

  Since $P$ passes successively through vertices $x, w_1, u, w_2, b, z, y$, we have:
  \[
      d(x,w_1) + d(w_1,u) + d(u,w_2) + d(w_2,b) + d(b,z) + d(z,y) ~\le~ |P| ~.
  \]
  The quantity $d(w_1,u) + d(u,w_2)$ is equal to $2$, since $w_1,u,w_2$ are consecutive in $P$. On the other hand, $|P| \le d + 1$ by definition of $P$. By the triangle inequality between $x$ and $y$, $d = d(x,y) \le d(x,w_1) + d(w_1,a) + d(a,z) + d(z,y)$. Therefore:
  \begin{align}
   d(x,w_1) + 2 + d(w_2, b) + d(b,z) + d(z,y) &\le d(x,w_1) + d(w_1,a) + d(a, z) + d(z,y) + 1 \nonumber\\
  \Rightarrow\quad 1 + d(w_2, b) + d(b,z) &\le d(w_1,a) + d(a, z) \label{eq:abz}
  \end{align}

  In fact, we will show that:
  \begin{claim}\label{claim:abz}
      $d(w_1,a) + d(a, z) \le d(w_2, b) + d(b,z)$.
  \end{claim}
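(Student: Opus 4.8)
The plan is to rewrite both sides of the inequality as genuine $\sU$-distances from $z$ and then compare them through a single common quantity. First I would record the preliminary fact that $z \in A_1 \cap A_2$: by definition $z$ is the first vertex of $P$ after $b$ lying in $A_1$, so the vertex of $P$ immediately preceding $z$ lies in $A_2\setminus A_1$ (it is not in $A_1$, and $\sU = A_1\cup A_2$), whence the edge of $P$ entering $z$ belongs to $A_2$ and thus $z \in A_2$ as well. Consequently $u,v,z$ all lie in both trees, and their pairwise distances $d(u,v), d(u,z), d(v,z)$ are intrinsic, i.e.\ computed identically whether one works inside the tree $A_1$ or inside the tree $A_2$.

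Next I would show that the left-hand side equals $d(w_1,z)$. The vertex $a$ is, by definition, the last vertex of the $u$--$z$ geodesic of $A_1$ lying on $P_1$; since $A_1$ is a tree and $P_1$ is the (unique) $u$--$v$ geodesic of $A_1$, the vertex $a$ is exactly the point where the subtree of $A_1$ containing $z$ attaches to $P_1$, i.e.\ the projection of $z$ onto $P_1$. As $w_1 \in P_1$, the $w_1$--$z$ geodesic of $A_1$ necessarily passes through $a$, so $d(w_1,a) + d(a,z) = d(w_1,z)$.

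The crux is then to prove $d(w_1,z) = d(w_2,z)$, i.e.\ that the two entry points $w_1$ and $w_2$, though living in different trees, are equidistant from $z$. I would run the symmetric construction in $A_2$: let $b^\ast$ be the projection of $z$ onto $P_2$ in $A_2$, so that $d(w_2,z) = d(w_2,b^\ast) + d(b^\ast,z)$ by the same argument. Since $a$ is the projection of $z$ onto $P_1$, it lies on the $u$--$z$, $v$--$z$ and $u$--$v$ geodesics of $A_1$, giving $d(u,a)+d(a,z) = d(u,z)$, $d(v,a)+d(a,z) = d(v,z)$ and $d(u,a)+d(a,v) = d(u,v)$; the same three identities hold for $b^\ast$ in $A_2$. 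Solving yields $d(u,a) = \tfrac12\bigl(d(u,v)+d(u,z)-d(v,z)\bigr) = d(u,b^\ast)$ and $d(a,z) = d(b^\ast,z)$, because the right-hand sides involve only the intrinsic distances. As $w_1$ and $w_2$ are both at distance $1$ from $u$ along $P_1$ and $P_2$ respectively, this gives $d(w_1,a) = |d(u,a)-1| = |d(u,b^\ast)-1| = d(w_2,b^\ast)$, and hence $d(w_1,z) = d(w_2,z)$.

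Finally I would close with the triangle inequality $d(w_2,z) \le d(w_2,b) + d(b,z)$. Chaining the three facts gives $d(w_1,a)+d(a,z) = d(w_1,z) = d(w_2,z) \le d(w_2,b)+d(b,z)$, which is exactly \cref{claim:abz}; combined with \eqref{eq:abz} it forces $1 \le 0$, the contradiction that rules out the cycle $C$. The main obstacle is the middle step $d(w_1,z) = d(w_2,z)$: it is the only place where information must be transferred across the two trees, and it works precisely because $u,v,z \in A_1\cap A_2$ together with $|P_1| = |P_2| = d(u,v)$ forces the projection of $z$ to sit at the same distance from $u$ (and from $z$) on both geodesics.
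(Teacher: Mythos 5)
Your proof is correct, and it departs from the paper's in one substantive way: how the vertex $b$ is handled. The paper proves the stronger pair of equalities $d(w_1,a)=d(w_2,b)$ and $d(a,z)=d(b,z)$, which amounts to showing that the path-vertex $b$ is itself the projection of $z$ onto $P_2$ inside $A_2$; for this it must argue that $b$ lies on the $A_2$-paths from $u$ and from $v$ to $z$, a geometric fact that depends on how $b$ and $z$ were extracted from the witness path $P$, and which the paper asserts quite tersely. You bypass this entirely: you introduce the projection $b^\ast$ of $z$ onto $P_2$ as a fresh vertex, run the same tripod identities at $u,v,z$ in each tree (legitimate because $u,v,z\in A_1\cap A_2$ and both trees are isometric in $\sU$, so the three pairwise distances are intrinsic), deduce $d(w_1,z)=d(w_2,z)$, and finish with the triangle inequality $d(w_2,z)\le d(w_2,b)+d(b,z)$. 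This needs no property of $b$ at all --- your inequality actually holds for an arbitrary vertex $b$ of $\sU$ --- and it is insensitive to degenerate configurations (e.g.\ $z=v$, where ``$b$ lies on the $A_2$-path from $v$ to $z$'' would fail, so the paper implicitly needs a separate argument that this case cannot occur). What you give up is only the equality version, which \cref{claim:abz} does not require; the contradiction with \cref{eq:abz} goes through unchanged. One caveat you share with the paper: the opening step $z\in A_2$ assumes the $P$-predecessor of $z$ is not in $A_1$, which in the boundary case where that predecessor is $b$ itself silently uses $b\ne v$; this is true (it follows from $|P|\le d+1$ and $|P_1|=|P_2|=d(u,v)$), but neither your argument nor the paper's spells it out.
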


  Before presenting its proof, observe that the inequality of \cref{claim:abz} contradicts \cref{eq:abz}, and the fact that some distance between two vertices of the same tree, namely $x$ and $y$, is shortened in $\sU'$. So $\sU'$ is universal-isometric for $\set{A_1,A_2}$, and thus $\sU$ cannot be minimum, and thus does not contain any cycle.

  \begin{proof}
  We will actually prove a stronger statement which is:
  \[
      d(w_1,a) = d(w_2, b) \quad\mbox{and}\quad d(a, z) = d(b,z) ~.
  \]

  We first remark that vertex $z\in A_1\cap A_2$. Indeed, we have $z\in A_1$. If $z\notin A_2$, then the edge $z-z'$ on the path $P$ from $z$ to $b$ is not in $A_2$. So, $z-z' \in A_1$, which implies $z'\in A_1$: a contradiction with the fact that $z$ is the first vertex on $P$ from $b$ to $y$ that is in $P_1$. So $z \in A_1\cap A_2$.

  As a preliminary step, let us show that $d(u,a) =  d(u,b)$, cf. \cref{eq:ua_ub} hereafter. Recall that $u,v,z \in A_1 \cap A_2$, and that distances in $A_i$ coincide with distances in $\sU$.

  By definition, $a$ is on the path from $u$ to $z$ in $A_1$. Thus
  \begin{align*}
      \dist_{A_1}(u, z) &= \dist_{A_1}(u, a) + \dist_{A_1}(a, z) \\
      \Rightarrow\quad d(u, z) &= d(u, a) + d(a, z)
  \end{align*}
  Similarly for $b$, which is on the path in $A_2$ from $u$ to $z$, we have $d(u, z) = d(u, b) + d(b, z)$. Combining these two equations, we get:
  \begin{align}
      d(u, a) + d(a, z) = d(u, b) + d(b, z) \label{eq:zaau_zbbu}
  \end{align}

  We use the following fact:

  \begin{fact}
  For any three vertices in a tree, the three paths between them have
  exactly one vertex in common.
  \end{fact}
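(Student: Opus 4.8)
The plan is to establish the standard \emph{median property} of trees: for three vertices $x,y,z$, the three pairwise geodesics---unique, since the ambient graph is a tree---have exactly one vertex in common. I would argue existence of such a common vertex first, and then its uniqueness.

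For existence, I would take $m$ to be the vertex of the geodesic $[x,y]$ that is closest to $z$, i.e.\ the vertex minimizing $\dist(\cdot,z)$ over $[x,y]$. The crucial observation is that the geodesic $[m,z]$ meets $[x,y]$ only at $m$: if it shared another vertex $p$ with $[x,y]$, then $p$ would lie strictly between $m$ and $z$ and satisfy $\dist(p,z)<\dist(m,z)$, contradicting the choice of $m$. Consequently the concatenation of $[x,m]\subseteq[x,y]$ with $[m,z]$ is a repetition-free path from $x$ to $z$, hence coincides with $[x,z]$ by uniqueness of paths in a tree; thus $m\in[x,z]$, and symmetrically $m\in[y,z]$. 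So $m$ lies on all three geodesics.

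For uniqueness, I would exploit that any common vertex $m$ lies on each geodesic, so every geodesic splits additively through $m$: $\dist(x,y)=\dist(x,m)+\dist(m,y)$, and likewise for the other two pairs. Combining these three identities forces $\dist(x,m)=\tfrac12\pare{\dist(x,y)+\dist(x,z)-\dist(y,z)}$, a quantity determined solely by the pairwise distances among $x,y,z$. Since a vertex of the path $[x,y]$ at a prescribed distance from the endpoint $x$ is unique, the common vertex $m$ is uniquely determined.

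The argument is elementary and I anticipate no genuine obstacle; the only point deserving a little care is verifying that the projection $m$ of $z$ onto $[x,y]$ is joined to $z$ by $[m,z]$ \emph{without} re-touching $[x,y]$, since that disjointness is exactly what makes the concatenation a genuine path and thus lets uniqueness of tree-paths close the existence step.
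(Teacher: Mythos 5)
Your proof is correct. Note, however, that the paper does not prove this Fact at all: it is stated without proof, as folklore (it is the standard \emph{median} or Steiner-point property of trees), and then immediately applied to the vertices $u,v,z$ inside the tree $A_1$. So there is no paper proof to compare against; your argument supplies what the paper takes for granted. Both halves of your proposal are sound: the existence step (projecting $z$ onto the geodesic $[x,y]$, observing that $[m,z]$ re-enters $[x,y]$ nowhere else by minimality, and invoking uniqueness of paths in a tree to conclude $m\in[x,z]$ and $m\in[y,z]$) and the uniqueness step (any common vertex $m$ must satisfy $\dist(x,m)=\tfrac12\pare{\dist(x,y)+\dist(x,z)-\dist(y,z)}$, which pins it down on $[x,y]$) are the classical argument, with no gaps.
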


  In particular, whenever applied to $u,v$ and $z$ in $A_1$, pairwise connected by the three tree paths $P_{uv}, P_{vz}, P_{uz}$ in $A_1$, the fact implies the last vertex of $P_{uz}$ that is on $P_{uv}$ is also the last vertex of $P_{vz}$ that is on $P_{uv}$. From this, we deduce that $a$ is on the path from $v$ to $z$ in $A_1$. Thus $d(v, z) = d(v, a) + d(a, z)$. With the same argument applied to the same vertices but in $A_2$, we deduce that $b$ is on the shortest path in $A_2$ from $v$ to $z$, we have $d(v, z) = d(v, b) + d(b, z)$. Therefore:
  \begin{align}
      d(v, a) + d(a, z) = d(v, b) + d(b, z)
    \label{eq:zaav_zbbv}
  \end{align}
  By subtracting \cref{eq:zaav_zbbv} from \cref{eq:zaau_zbbu}, we have:
  \begin{align}
          && (d(u, a) + d(a, z)) - (d(v, a) + d(a, z)) &= (d(u, b) + d(b, z)) - (d(v, b) + d(b, z)) \nonumber\\
          \Rightarrow&& d(u, a) - d(v, a) &= d(u, b) - d(v, b) \label{eq:ua=ub}
  \end{align}
  Note that $P_1$ and $P_2$ are paths of $A_1$ and $A_2$ respectively, thus they are shortest paths in $\sU$. Since $a$ and $b$ are on $P_1$ and $P_2$ respectively, we have $d(u,v) = d(u,a) + d(a,v) = d(u,b) + d(b,v)$. And by adding to \cref{eq:ua=ub}, we get:
  \begin{align}
      d(a,u) = d(b,u) \label{eq:ua_ub}
  \end{align}
  as claimed.

  We now show that:
  \begin{align}   
      d(a,w_1) = d(b,w_2) \label{eq:w1a_w2b}
  \end{align}
  Indeed, either $d(a,u) = d(b,u) = 0$, i.e., $a = u = b$, and thus $d(a,w_1) = d(u,w_1) = 1 = d(u,w_2) = d(b,w_2)$. Or, $d(a,u) = d(b,u) > 0$, and thus $w_1$ is on the path from $u$ to $a$ on $P_1$, and $w_2$ is on the path from $u$ to $b$ on $P_2$. This implies $d(u,w_1) + d(w_1,a) = d(u,w_2) + d(w_2,b)$, and thus $d(w_1,a) = d(w_2,b)$.

  Furthermore, applying \cref{eq:ua_ub} on \cref{eq:zaau_zbbu} gives that:
  \begin{equation}
      d(a,z) = d(b,z) \label{eq:az_bz}
  \end{equation}
  \cref{eq:w1a_w2b} and \cref{eq:az_bz} completes the proof of \cref{claim:abz}.
  \end{proof}
  This completes the proof of \cref{th:2trees}.
  \end{proof}

  This structural result allows us to predict that every minimum isometric-universal graph of two trees $\set{T_1,T_2}$ either is a tree, or contains an isometric-universal graph for $\set{T_1,T_2}$ as a subgraph (only by removing edges). However, as the proof of
  \cref{th:2treesalgo} will explain, if we can compute a tree that contains $T_1$ and $T_2$ as subgraphs, then we can compute a minimum isometric-universal graph for $\set{T_1,T_2}$. More precisely:

  \begin{theorem}\label{th:2treesalgo}
      A minimum isometric-universal graph for a pair of trees with at most $n$ vertices each can be computed in time $O(n^{5/2}\log{n})$.
  \end{theorem}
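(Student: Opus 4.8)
The plan is to leverage the structural result of \cref{th:2trees} to reduce the minimization problem to a smallest-common-supertree computation, and then invoke the algorithm of Gupta and Nishimura~\cite{GN98}.

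First, I would argue that the minimum number of vertices over all isometric-universal graphs for $\set{T_1,T_2}$ is already attained by a tree. Let $\sU$ be a minimum isometric-universal graph for $\set{T_1,T_2}$, with $t$ vertices. By repeatedly deleting edges and vertices while preserving isometric-universality, one extracts a \emph{minimal} isometric-universal subgraph $\sU'$ of $\sU$. Since $\sU'$ is still isometric-universal, it has at least $t$ vertices (as $t$ is the minimum), while being a subgraph of $\sU$ forces it to have at most $t$; hence it has exactly $t$ vertices. Thus $\sU'$ is simultaneously minimum and minimal, and \cref{th:2trees} guarantees that $\sU'$ is a tree. Consequently the optimum over all isometric-universal graphs equals the optimum over all isometric-universal \emph{trees}, so it suffices to compute a minimum isometric-universal tree.

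Second, I would observe that for trees the notions of ``isometric subgraph'' and ``subgraph'' coincide. Indeed, if a subgraph $A$ of a tree $\sU$ is isomorphic to $T_i$, then $A$ is connected (as $T_i$ is), hence a subtree; for any two vertices $x,y$ of $A$ the unique $A$-path between them is also the unique $\sU$-path, so $\dist_A(x,y) = \dist_{\sU}(x,y)$, and $A$ is automatically an isometric subgraph. The converse is trivial. Therefore a tree $\sU$ is isometric-universal for $\set{T_1,T_2}$ exactly when it contains both $T_1$ and $T_2$ as subgraphs, i.e.\ when $\sU$ is a common supertree of $T_1$ and $T_2$. Computing a minimum isometric-universal tree is thus identical to computing a \emph{smallest common supertree} of the two trees (in the subgraph-containment sense, and for unrooted trees).

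Finally, I would run the Gupta--Nishimura algorithm~\cite{GN98} for the smallest common supertree of two trees on $T_1$ and $T_2$, whose running time is $O(n^{5/2}\log{n})$, and return the tree it produces; by the two reductions above this tree is a minimum isometric-universal graph for $\set{T_1,T_2}$. The main obstacle is the first reduction: certifying that a tree already realizes the optimum, so that the search may be restricted to trees (equivalently, to common supertrees). This is precisely the step that \cref{th:2trees} makes possible and that constitutes the technical heart of the result. Once it is granted, the equivalence with smallest common supertrees is routine and the claimed complexity is inherited directly from~\cite{GN98}.
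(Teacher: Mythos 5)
Your proposal is correct and follows essentially the same route as the paper: reduce to minimum isometric-universal \emph{trees} via \cref{th:2trees}, note that for trees subgraph containment and isometric containment coincide (uniqueness of paths), and invoke the Gupta--Nishimura smallest common supertree algorithm. The only difference is that you spell out the extraction of a minimum \emph{and minimal} universal graph needed to apply \cref{th:2trees}, a step the paper leaves implicit; this is a welcome elaboration but not a different argument.
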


  \begin{proof}
      Let $T_1, T_2$ be two trees with at most $n$ vertices. From \cref{th:2trees}, there always exists a minimum isometric-universal graph of two trees that is a tree itself. Thus, any algorithm that computes a minimum isometric-universal tree for $\set{T_1, T_2}$ computes a minimum isometric-universal graph for $\set{T_1, T_2}$ too. 

      In a tree there exists only one path connecting any pair of vertices. Thus, every tree that contains $T_1$ and $T_2$ as subgraphs (such a tree is called a \emph{supertree} under graph isomorphism in \cite{GN98}) is also an isometric-universal tree. According to Gupta and Nishimura~\cite[Theorem~9.6]{GN98}, a minimum supertree of two trees can be computed in time  $O(n^{5/2}\log{n})$.
      
      From the above discussion, the minimum supertree for $T_1,T_2$ is a minimum isometric-universal tree and also a minimum isometric-universal graph for $\set{T_1,T_2}$, which concludes the proof.
   \end{proof}

\subsection{The case of two forests}

Now we extend the previous result to forests instead of trees. The combination of \cref{th:2graphs} and \cref{th:2treesalgo} leads to the following result:

\begin{theorem}\label{th:2forests}
    A minimum isometric-universal graph for a pair of forests with at most $n$ vertices each can be computed in time $O(n^{7/2}\log{n})$.
\end{theorem}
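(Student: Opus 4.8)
The plan is to derive \cref{th:2forests} as a direct corollary of the component-reduction theorem \cref{th:2graphs}, using the two-tree algorithm of \cref{th:2treesalgo} as the subroutine that handles each pair of components. Let $F_1, F_2$ be the two input forests, each with at most $n$ vertices, and let $s$ and $r$ denote their numbers of connected components, with $s \le r$ without loss of generality. The crucial observation is that every connected component of a forest is a tree; therefore each pair $(G_i, H_j)$ consisting of a component of $F_1$ and a component of $F_2$ is a pair of trees, to which \cref{th:2treesalgo} applies verbatim.

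First I would instantiate the superlinear function $f$ demanded by the hypothesis of \cref{th:2graphs}. By \cref{th:2treesalgo}, a minimum isometric-universal graph of two trees with at most $m$ vertices each is computable in time $O(m^{5/2}\log m)$, so I set $f(m) = c\, m^{5/2}\log m$, where $c$ is the constant hidden in that bound. This function upper-bounds the cost of processing any component pair, and it is superlinear because $f(m)/m = c\, m^{3/2}\log m$ is non-decreasing. Hence the hypotheses of \cref{th:2graphs} are met.

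Next I would substitute $f$ into the running-time bound of \cref{th:2graphs}, namely $O(r\min\{f(2n),\, s f(n)\} + s\min\{r\sqrt{s}\log n,\, rs + s\log\log s\})$, and estimate each term using $s \le r \le n$ (since each of the at most $n$ vertices lies in one component). For the first term, $f(2n) = O(n^{5/2}\log n)$ together with $r \le n$ gives $r\, f(2n) = O(n^{7/2}\log n)$. For the matching term, $r\sqrt{s}\log n \le n^{3/2}\log n$, so $s\, r\sqrt{s}\log n \le n^{5/2}\log n$, which is absorbed by the first term. Summing, the total running time is $O(n^{7/2}\log n)$, as claimed.

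I do not expect a genuine obstacle here: the statement is essentially a plug-in of the tree subroutine into the generic two-graph reduction. The only steps requiring care are verifying that the chosen $f$ is genuinely superlinear, so that \cref{th:2graphs} is applicable, and checking that the $O(n^{7/2}\log n)$ contribution from the component-pair computations dominates the matching term — both of which follow immediately from the elementary bounds $s \le r \le n$.
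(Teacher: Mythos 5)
Your proposal is correct and follows essentially the same route as the paper's proof: both instantiate the superlinear bound $f(m) = O(m^{5/2}\log m)$ from \cref{th:2treesalgo}, apply the component-reduction of \cref{th:2graphs} (noting that components of forests are trees), and bound the result using $s \le r \le n$, with the term $r\,f(2n) = O(n^{7/2}\log n)$ dominating the matching term $sr\sqrt{s}\log n = O(n^{5/2}\log n)$. Your explicit verification of superlinearity is a welcome touch that the paper states only in passing.
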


\begin{proof}
    This is a direct application of \cref{th:2graphs} applied on two forests, composed respectively of $s$ and $r$ trees. In the complexity formula of \cref{th:2graphs}, we can bound $r$, $s$, and $f(n)$ as follows:
    \begin{itemize}
        \item $s \le r \le n$.
        \item $f(n) = C n^{5/2}\log_2{n}$, for some constant $C>0$, since, from \cref{th:2treesalgo}, the time complexity of finding a minimum isometric-universal graph of two trees is at most $f(n)$. Note that $f$ is superlinear.
    \end{itemize} 

   We have:\\

   $r\min\set{f(2n),\, s f(n)} + s \min\set{ r \sqrt{s}\log{n},\, rs + s\log\log{s}}$
   \vspace{-5pt}
    \begin{eqnarray*}
             &\le& rf(2n) + sr \sqrt{s}\log{n}\\
             &=& O(n^{7/2}\log{n} + n^{5/2}\log{n})\\
             &=& O(n^{7/2}\log{n}) ~.
    \end{eqnarray*}
    Applying \cref{th:2graphs}, this completes the proof.
\end{proof}


\section{Universal Graph for Three Forests}
\label{sec:3forests}


\subsection{Three trees and more} 

\Status{Edgar: Passe faite jusqu'à la fin de la preuve de \cref{th:3trees}}

Unfortunately, \cref{th:2trees} cannot be extended to three trees. Indeed, as we will show in \cref{th:3trees}, there are triples of trees with no minimum isometric-universal graph that is a tree. As we will see later in \cref{subsec:greedy}, this implies that no greedy strategy can build a minimum isometric-universal graph, even if equipped with an optimal oracle for isometric-universal graphs for a pair of graphs.

\begin{theorem}\label{th:3trees}
  For every integer $t\ge 9$, there is a family of trees $\set{T_1,T_2,T_3}$, each with $2 t^3 + 2$ vertices and with pathwidth\footnote{The pathwidth of a graph $G$ is the smallest integer $p$ such that $G$ is a subgraph of some interval graphs with maximum clique size $p+1$.} two, for which every isometric-universal tree is not minimum.
\end{theorem}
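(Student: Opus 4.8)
The plan is to prove the statement constructively: for each $t\ge 9$ I would exhibit three explicit trees $T_1,T_2,T_3$, each on $2t^3+2$ vertices and of pathwidth two, together with two ingredients. First, an isometric-universal graph $\sU^\ast$ for $\set{T_1,T_2,T_3}$ that \emph{contains a cycle}. Second, a proof that \emph{every} isometric-universal tree for $\set{T_1,T_2,T_3}$ has strictly more vertices than $\sU^\ast$. Since a minimum isometric-universal graph can have at most $\cardV*{\sU^\ast}$ vertices, this immediately shows that no isometric-universal tree attains the minimum. Note that a gap of a single vertex between $\sU^\ast$ and the best tree already suffices, so the whole difficulty lies in pinning down the two bounds, not in maximizing their gap.

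For the construction I would design $T_1,T_2,T_3$ so that they are meant to be \emph{wrapped onto a common cycle}: each $T_i$ is essentially a long path (its \emph{rim}), carrying one rigid \emph{address gadget} — a short branch making the tree non-caterpillar, hence of pathwidth exactly two — whose sole purpose is to force, up to symmetry, how $T_i$ must sit inside any host. The cubic length $\Theta(t^3)$ of the rim and the hypothesis $t\ge 9$ are there to make the gadgets long and rigid enough that (a) the no-shortcut inequalities needed for isometry hold, and (b) the position of each embedded copy is essentially determined. The key design goal is that any two of the three trees share a \emph{long} common isometric subpath, but that the three pairwise agreements are cyclically offset: realizing all three simultaneously inside one host forces the three agreement paths to close up into a cycle. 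On the cyclic graph $\sU^\ast$ the three overlaps live on three distinct arcs and are perfectly consistent; inside a tree they cannot all coexist.

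For the upper bound I would present $\sU^\ast$ explicitly as a cycle $C$ of length $\Theta(t^3)$ with the common gadgets attached, and check that each $T_i$ embeds isometrically into $\sU^\ast$. The only thing to verify is that no pair of vertices of an embedded $T_i$ is brought closer by the ``other way around'' the cycle; this reduces to a handful of inequalities comparing an arc length against the sum of the two competing pieces, all of which hold once $t\ge 9$ because each arc is the dominant length by construction. Counting then gives $\cardV*{\sU^\ast}=N$ for an explicit $N$.

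The lower bound is the crux and the main obstacle. Given any \emph{minimal} isometric-universal tree $\sU_T$, by minimality $\sU_T=A_1\cup A_2\cup A_3$ where $A_i$ is the isometric image of $T_i$, so that $\cardV*{\sU_T}=\sum_i\cardV*{T_i}-(\text{total overlap})$, the overlap being controlled by the pairwise intersections $A_i\cap A_j$ via inclusion–exclusion. I would first prove a \emph{rigidity lemma}: the address gadgets force each embedding to align essentially uniquely, so that each pairwise intersection $A_i\cap A_j$ is within a bounded amount of the intended long common path — this is exactly where pathwidth two and the cubic rim length are exploited, through a careful distance analysis analogous to the one in the proof of \cref{th:2trees}. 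I would then establish the decisive structural fact: the three intended long overlaps cannot be \emph{simultaneously} present in a tree, since their union reproduces the same ``three agreement paths closing into a cycle'' configuration as in $\sU^\ast$, contradicting the acyclicity of $\sU_T$. Consequently at least one pairwise overlap must be shortened by at least one vertex relative to the cyclic optimum, yielding $\cardV*{\sU_T}\ge N+1>N=\cardV*{\sU^\ast}$. Therefore every isometric-universal tree strictly exceeds $\sU^\ast$ and hence is not minimum, which is the claim. The delicate step — and the one I expect to require the most care — is the rigidity lemma, because the entire overlap bookkeeping, and thus the final inequality, depends on the embeddings being forced into the intended positions.
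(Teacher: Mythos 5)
Your overall logical frame (exhibit a cyclic universal graph $\sU^\ast$, then show every isometric-universal tree is strictly larger) matches the paper's, but the construction you propose cannot support the lower bound, and the ``rigidity lemma'' it rests on is false. The reason is that for \emph{tree} hosts, isometry imposes no metric constraint at all: every subtree of a tree is automatically an isometric subgraph (this is exactly the observation the paper uses in the proof of \cref{th:2treesalgo}), so an isometric embedding of $T_i$ into a tree host is nothing more than a subgraph embedding, and nothing ``forces'' where $T_i$ sits. Consequently, if each $T_i$ is a long path (rim) of length $\Theta(t^3)$ carrying one short branch, the three trees can be \emph{telescoped}: take a single host path barely longer than one rim, nest the three rims along it with small shifts, and attach the three short gadget branches wherever they are needed. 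This host is a tree, contains each $T_i$ as a subtree (hence isometrically), and its size is roughly one rim length plus the shifts plus the total gadget length. Meanwhile, in your cyclic host each rim must occupy at most half the cycle (a path longer than half the circumference has a shortcut the other way around), so $\cardV*{\sU^\ast}$ is at least the full circumference, i.e.\ about \emph{twice} a rim length. Hence the telescoped tree beats $\sU^\ast$ whenever the gadgets are short, which is the opposite of the inequality you need; equivalently, the pairwise intersections $A_i\cap A_j$ can greatly \emph{exceed} the ``intended'' overlaps rather than staying within a bounded amount of them, so the bookkeeping step ``one overlap loses a vertex, hence $\cardV*{\sU_T}\ge N+1$'' collapses.

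This is precisely the difficulty the paper's construction is engineered to avoid: instead of paths with small gadgets, each $T_i$ is the union of two \emph{subdivided stars} of distinct types (type-$p$ means $t^p$ branches of length $t^{3-p}$, so all stars have $s=t^3+1$ vertices), joined at their centers. Two stars of different types, or of the same type with distinct centers, can only intersect in $O(t^2)$ vertices (\cref{claim:inter_S}) --- a statement about degrees and branch lengths, i.e.\ about subgraph structure, not about metric rigidity --- so large overlap between two embedded trees forces the shared-type stars to be \emph{centric} (same center), and all three pairs being centric forces a triangle among the three centers, impossible in a tree (\cref{claim:centric}). That yields a deficit of $s-O(t^2)$, not just one vertex, against the cyclic host with $3s$ vertices. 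If you want to salvage your plan, the gadgets cannot be short: they must themselves carry $\Theta(t^3)$ vertices arranged so that misaligned placements forfeit a large share of them --- at which point you have essentially rediscovered the paper's star-based construction.
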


\begin{proof}
    Each tree of $\sF = \set{T_1,T_2,T_3}$ is composed of two disjoint stars uniformly subdivided with an edge between their centers. These two stars are chosen among a set of three different subdivided stars.
    
    More precisely, a \emph{type-$i$ star}, for $i\in\set{1,2,3}$, is the graph obtained from a $K_{1,t^i}$ where each edge is replaced by a path of length $t^{3-i}$. For convenience, we let $s = t^3 + 1$, so that every type-$i$ star has precisely $s$ vertices. The \emph{center} of a type-$i$ star is its unique vertex of degree $t^i > 2$, assuming $t\ge 3$.


    Each $T_i \in \sF$, for $i\in\set{1,2,3}$, is obtained by taking a copy of a type-$i$ star and a copy of a type-$(1+(i \bmod 3))$ star, and by linking their centers by an edge. See \cref{fig:familycycletree}. We have $\cardV*{T_i} = 2s = 2t^3 + 2$ as required. We check that $T_i$ is a tree of pathwidth two. \label{page:pw2} (The pathwidth upper bound comes from the fact that each star without its center is a union of paths (a.k.a. a linear forest), and so of pathwidth one, and lower bound from the fact that $T_i$ contains the $Y$-graph, an obstruction for pathwidth-1 graphs.)

    \newcommand{\ETOILE}[5]{%
    \tikzset{every node/.style={vertex,fill=none}}
    
      \node (c#2) at (#1,0) {};
    
      \node (l1#2) at ($(c#2)+(-0.75,1)$) {};
      \node (l2#2) at ($(c#2)+(-0.5,1)$) {};
      \node (l3#2) at ($(c#2)+(+0.75,1)$) {};
    
      \node (l1u1#2) at ($(l1#2)+(0,0.4)$) {};
      \node (l2u1#2) at ($(l2#2)+(0,0.4)$) {};
      \node (l3u1#2) at ($(l3#2)+(0,0.4)$) {};
    
      \node (l1u3#2) at ($(l1u1#2)+(0,0.8)$) {};
      \node (l2u3#2) at ($(l2u1#2)+(0,0.8)$) {};
      \node (l3u3#2) at ($(l3u1#2)+(0,0.8)$) {};
    
      \tikzset{every node/.style={draw=none,inner sep=3pt}}
     
      \draw (c#2) -- (l1#2);
      \draw (c#2) -- (l2#2);
      \draw (c#2) -- (l3#2);
    
      \draw (l1#2) -- (l1u1#2);
      \draw (l2#2) -- (l2u1#2);
      \draw (l3#2) -- (l3u1#2);
    
      \draw[dotted] (l1u1#2) -- (l1u3#2);
      \draw[dotted] (l2u1#2) -- (l2u3#2);
      \draw[dotted] (l3u1#2) -- (l3u3#2);
    
      \draw[dotted] (-0.2+#1,1.8+0) -- (0.5+#1,1.8+0);
    
      \draw[dashed, <->] ($(l1#2)+(-0.1,1.6)$) -- ($(l3#2)+(0.1,1.6)$) node[midway,above] {$#3$};
      \draw[dashed, <->] ($(l1#2)+(-0.4,-0.1)$) -- ($(l1u3#2)+(-0.4,+0.1)$) node[midway,left] {$#4$};
    
      \draw[draw=#5, rounded corners, opacity=.9] (-1.9+#1,-0.35+0) rectangle (1+#1,3.7);
      \node at (-1+#1,3.4) [text=#5] {type-$#2$};
    }
    
    \begin{myfigure}
        \begin{tikzpicture}[scale=0.68]
        \ETOILE{0}{1}{t}{t^2}{RedOrange}
        \ETOILE{3.35}{2}{t^2}{t}{DarkGreen}
        \draw (c1) -- (c2);
        \end{tikzpicture}
        \hfill
        \begin{tikzpicture}[scale=0.68]
        \ETOILE{0}{2}{t^2}{t}{DarkGreen}
        \ETOILE{3.35}{3}{t^3}{1}{RoyalBlue}
        \draw (c2) -- (c3);
        \end{tikzpicture}
        \hfill
        \begin{tikzpicture}[scale=0.68]
        \ETOILE{0}{3}{t^3}{1}{RoyalBlue}
        \ETOILE{3.35}{1}{t}{t^2}{RedOrange}
        \draw (c3) -- (c1);
        \end{tikzpicture}
        \\
        \hspace{21mm} $T_1$ \hfill $T_2$ \hfill $T_3$ \hspace{21mm}
    \caption{A tree family $\set{T_1,T_2,T_3}$ whose every minimum isometric-universal graph is not a tree.}  
    \label{fig:familycycletree}
    \end{myfigure}

    Let $\sU$ (resp. $\sT$) be any minimum isometric-universal graph (resp. tree) for $\sF$. Our goal is to show that $\cardV*{\sT} > \cardV*{\sU}$. 

    Clearly, $\sF$ has a cyclic isometric-universal graph with $3s$ vertices constructed as follows (cf. \cref{fig:univfamilycycletree}). It is composed of disjoint type-$1$, type-$2$, and type-$3$ stars where a cycle between their centers is added. 
    
    \begin{myfigure}
      \begin{tikzpicture}[scale=0.9]
        \ETOILE{0}{1}{t}{t^2}{RedOrange}
        \ETOILE{4}{2}{t^2}{t}{DarkGreen}
        \ETOILE{8}{3}{t^3}{1}{RoyalBlue}
        \draw (c1) -- (c2);
        \draw (c2) -- (c3);
        \draw (c1) to [bend right=20]  (c3);
      \end{tikzpicture}
      \caption{A cyclic isometric-universal graph for $\sF$ that is smaller than any isometric-universal tree for $\sF$.}
        \label{fig:univfamilycycletree}
    \end{myfigure}

    It is easy to check that this graph is isometric-universal for $\sF$, and has $3s$ vertices. Therefore, we have the upper bound $3s \ge \cardV*{\sU}$. The remainder of the proof is to show a lower bound for $\sT$, that is $\cardV*{\sT} > 3s$. 
    
    Consider any isometric embedding of $T_1,T_2,T_3$ in $\sT$, and denote by $\hT_1,\hT_2,\hT_3$ their corresponding isometric subgraphs in $\sT$. From the inclusion-exclusion principle applied to the number of vertices of $\hT_1\cup \hT_2\cup \hT_3$, we have for any range $I$ of indices:
    \[
        \card{ V\pare{ \bigcup_{i\in I} \hT_i}} = \sum_{J\subseteq I} (-1)^{|J|+1} \card{ V\pare{ \bigcap_{j\in J} \hT_j}}
    \]
    which implies for $I = \set{1,2,3}$:
    \[
        \cardV{\hT_1 \cup \hT_2 \cup \hT_3} = \sum_{i=1}^3 \cardV{\hT_i} - \sum_{i\neq j} \cardV{ \hT_i \cap \hT_j } + \cardV{\hT_1\cap \hT_2\cap \hT_3}
    \]
    Hence, we have:
    \begin{eqnarray*}        
        \cardV{\sT} &\ge& \cardV{\hT_1 \cup \hT_2 \cup \hT_3} \ge \sum_{i=1}^3 \cardV{\hT_i} - \sum_{i\neq j} \cardV{\hT_i \cap \hT_j}\\
        &\ge& 6s - \sum_{i\neq j} \cardV{\hT_i \cap \hT_j}
    \end{eqnarray*}

    So, it suffices to show that the term $\sum_{i\neq j} \cardV*{\hT_i \cap \hT_j} < 3s$, i.e., the pairwise intersection of trees is not too large. In fact, we will show that $\sum_{i\neq j} \cardV*{\hT_i \cap \hT_j} \le 2s + O(s^{2/3})$, which suffices for $s$ large enough.
    
    In order to analyze $\cardV*{\hT_i \cap T_j}$, we first observe that stars of different types intersect poorly in $\sT$, and similarly for stars with different centers. More precisely:
    
    \begin{claim}\label{claim:inter_S}
        Let $S,S'$ be two type-$p$ and type-$p'$ stars with centers $c$ and $c'$ that are isometric subgraphs of $\sT$. Then, if $p\leq p'$,
        \[
            \cardV{S \cap S'} \le \left\{%
            \begin{array}{ll}
                 1 + 2t^{3-p'} & \mbox{if $c \neq c'$}\\
                 1 + t^{p+3-p'} & \mbox{if $p \neq p'$}\\
            \end{array}
            \right.
        \]
    \end{claim}

    \begin{proof}
        First, let us show that, if $c \neq c'$, then $\cardV*{S\cap S'} \le 1 + 2 t^{3-p'}$.
        
        Every vertex, apart from the center, has degree at most two in $S$ or in $S'$, and thus in $S\cap S'$. So, if $c \neq c'$, all vertices of $S\cap S'$ are of degree at most two, and thus $S\cap S'$ must be a linear forest. This forest has only one component. Indeed, if $P,Q$ are two components of $S\cap S'$, then there must be in $S$ a path connecting $P$ to $Q$, and another one in $S'$ connecting $P$ to $Q$. These paths must be disjoint (otherwise they would be in the intersection), and thus form, possibly with the help of $P,Q$, a cycle in $\sT$: a contradiction. Therefore, the quantity $|E(S\cap S')|$ is upper bounded by the minimum of the longest path in $S$ and $S'$, which is $2\min(t^{3-p}, t^{3-p'}) = 2 t^{3-p'}$. It follows that $\cardV*{S\cap S'} \le 1 + 2t^{3-p'}$ as claimed.

        Second, let us show that, if $p \neq p'$ (i.e., $p<p'$), then $\cardV*{S\cap S'} \le 1 + t^{p+3-p'}$.

        If $c \neq c'$, then the previous case applies, and $\cardV*{S\cap S'} \le 1 + 2t^{3-p'} \le 1 + t^{p+3-p'}$, since $t^p \ge 2$, and we are done.

        So, assume that $c = c'$. Vertex $c$ has at most $\min(t^p,t^{p'}) = t^p$ neighbors in $S\cap S'$. Moreover, every path of $S$ or $S'$ starting from $c$ and that does not contain a neighbor in $S\cap S'$ does not contain any other vertex of $S\cap S'$ (it creates a cycle otherwise). Thus, $S\cap S'$ contains at most $t^p$ paths starting from $c$. On the other hand, all vertices of $S\cap S'$ must be at distance at most $\min(t^{3-p}, t^{3-p'}) = t^{3-p'}$ from $c$, because $S'$ has no farther vertices. It follows that $\cardV*{S\cap S'} \le 1 + t^p \cdot t^{3-p'} = 1 + t^{p+3-p'}$ as claimed.
        \end{proof}

    By construction, each tree $T_i$ is the union of two stars of different types (with a single edge between their centers), and each pair $\set{T_i,T_j}$ of trees have exactly one type-$k$ star in common, for some $k$. Actually, $k = 2i - j + 2$ for $i<j$. For instance, $T_1$ and $T_2$ have a type-$2$ star in common, which also appears as subgraph in $\hT_i$ and in $\hT_j$. However, having a common type-$k$ star for the pair $\set{T_i,T_j}$ does not imply that the subgraph $\hT_i \cap \hT_j$ contains a type-$k$ star. Obviously, the way the copies of the type-$k$ stars of $\hT_i$ and of $\hT_j$ intersect will play an important role for $\cardV*{\hT_i\cap \hT_j}$.
    
    Consider a pair $\set*{\hT_i,\hT_j}$ such that $\set{T_i,T_j}$ have a type-$k$ star in common, and let $S_i,S_j$ be the type-$k$ star respectively in $\hT_i$ and in $\hT_j$. Then, the pair $\set*{\hT_i,\hT_j}$ is called \emph{centric} if the centers of $S_i$ and $S_j$ coincide in $\sT$.
    
    The important observation, because $\sT$ is acyclic, is:

    \begin{claim}\label{claim:centric}
        There is a pair $\set*{\hT_{i_0},\hT_{j_0}}$ that is not centric.
    \end{claim}

    \begin{proof}
        By way of contradiction, assume the three pairs $\set*{\hT_1,\hT_2}$, $\set*{\hT_2,\hT_3}$, $\set*{\hT_3,\hT_1}$ are centric. So, by definition, for each pair $\set*{\hT_i,\hT_j}$ there is some vertex in $\sT$ corresponding to the center shared by the common type-$k$ star of $\hT_i$ and of $\hT_j$. Let $c_k$ be this center.
        
        Tree $\hT_2$ contains $c_2$ (because $\set*{T_1,T_2}$ have a type-$2$ star in common), and also $c_3$ (because $\set*{T_2,T_3}$ have a type-$3$ in common). In particular, $c_2$ and $c_3$ are at distance 1 in $\hT_2$. Similarly, $\hT_3$ contains $c_3$ and $c_1$ that are at distance 1 in $\hT_3$. And, similarly, $\hT_1$ contains $c_1$ and $c_2$ that are at distance 1 in $\sT_1$.
        
        It follows that $c_1,c_2,c_3$ induce a cycle in $\hT_1 \cup \hT_2 \cup T_3 \subseteq \sT$: a contradiction.
    \end{proof}
    
    We are now ready to upper bound the number of vertices in any intersecting pair of trees.

    \begin{claim}\label{claim:inter_T}
        For all $i\neq j$, 
        \[
            \cardV{\hT_i \cap \hT_j} \le \left\{%
            \begin{array}{ll}
                 4 + t + 4t^2 & \mbox{if $\set*{\hT_i,\hT_j}$ is not centric}\\
                 s + 3 + t + 2t^2 & \mbox{otherwise} 
            \end{array}
            \right.
        \]
    \end{claim}

    \begin{proof}
        Let $S_i, S_j$ be the type-$k$ stars respectively in $\hT_i$ and in $\hT_j$ that is common to $\set{T_i,T_j}$. Denote by $S' = \hT_i \setminus S_i$ and $S'' = \hT_j \setminus S_j$ the two other stars forming $\hT_i$ and $\hT_j$. As $\hT_i$ and $S_i \cup S'$ differ by an edge, we have $V(\hT_i) = V(S_i) \cup V(S')$, and similarly, $V(\hT_j) = V(S_j) \cup V(S'')$.
        
        Note that by definition of the graph intersection operator, $V(G\cap H) = V(G) \cap V(H)$, for all graphs $G,H$. By distributive law of intersection over union set, we have:
        \begin{eqnarray*}
            &&V(\hT_i \cap \hT_j) \\
            &=& V(\hT_i) \cap V(\hT_j)\\
                &=& \pare{V(S_i) \cup V(S')} \cap \pare{V(S_j) \cup V(S'')}\\
                &=& \pare{V(S_i)\cap V(S_j)} \cup \pare{V(S')\cap V(S_j)} \cup \pare{V(S_i)\cap V(S'')} \cup \pare{V(S')\cap V(S'')}\\
                &=& V(S_i\cap S_j) \cup V(S'\cap S_j) \cup V(S_i\cap S'') \cup V(S'\cap S'')
        \end{eqnarray*}
        In particular,
        \begin{equation}\label{eq:SiSj}
            \cardV{\hT_i \cap \hT_j} ~\le~ \cardV{S_i\cap S_j} + \cardV{S'\cap S_j} + \cardV{S_i\cap S''} + \cardV{S'\cap S''}
        \end{equation}
        Because $i\neq j$, among the four stars $S_i,S_j,S',S''$ contained in $\hT_i \cup \hT_j$, only $S_i$ and $S_j$ have the same type, namely $k$. Let $k',k''$ be the types of $S'$ and $S''$ respectively. So, the three types $k,k',k'' \in\set{1,2,3}$ are pairwise different.
        
        \cref{claim:inter_S} asserts that the of intersection two type-$p$ and type-$p'$ stars contains at most $1 + t^{p+3-p'}$ vertices, assuming $p<p'$. This is always at most $1+t^2$ (as $p'-p \ge 1$), and at most $1+t$, if $p=1$ and $p'=3$.
        
        One can apply on the last three summands of \cref{eq:SiSj} the second upper bound of \cref{claim:inter_S}, since, for each of these terms, the types of the two stars differ. Whatever the relative order of $k,k',k''$, one value is $1+t$, and the other two are $1+t^2$. Therefore, \cref{eq:SiSj} can be rewritten as:
        \begin{eqnarray*}
            \cardV{\hT_i \cap \hT_j} &\le& \cardV{S_i\cap S_j} + (1+t) + 2(1+t^2) \\
            &\le& \cardV{S_i\cap S_j} + 3 + t + 2t^2
        \end{eqnarray*}

        Let $c_i,c_j$ be the centers of $S_i,S_j$ respectively. By definition of $S_i,S_j$, $\set*{\hT_i,\hT_j}$ is centric if and only if $c_i = c_j$. So, if $\set*{\hT_i,\hT_j}$ is not centric, then the first upper bound of \cref{claim:inter_S} applies, and $\cardV*{S_i \cap S_j} \le 1 + 2t^{3-k} \le 1 + 2t^2$ (recall that $S_i,S_j$ are both type-$k$ stars). Otherwise, one can simply bound $\cardV*{S_i\cap S_j} \le \cardV*{S_i} = s$.
        
        Overall, we have proved that $\cardV*{\hT_i,\hT_j} \le (1+2t^2) + 3 + t + 2t^2 = 4 + t + 4t^2$ if $\set*{\hT_i,\hT_j}$ is not centric, and $\cardV*{\hT_i,\hT_j} \le s + 3 + t + 2t^2$ otherwise, as required.
    \end{proof}

    We are now ready to show that $\sum_{i\neq j} \cardV*{\hT_i \cap \hT_j} < 3s$.
    
    By \cref{claim:centric}, at least one pair of trees out of the three possible ones is not centric, and the first upper bound of \cref{claim:inter_T} applies for that pair. For the other two pairs, the second upper bound of \cref{claim:inter_T} applies. It follows that:
    \begin{eqnarray*}
        \sum_{i\neq j} \cardV{\hT_i \cap \hT_j} &\le& \pare{4+t+4t^2} + 2\pare{s + 3 + t + 2t^2} \\
        &\le& 2s + 10 + 3t + 8t^2 
    \end{eqnarray*}
    We check that $10 + 3t + 8t^2 < 1 + t^3 = s$ as soon as $t \ge 9$. And thus, $\sum_{i\neq j} \cardV*{\hT_i \cap \hT_j} < 3s$, for every $t \ge 9$.
    
    This implies $\cardV*{\sT} > 3s$, and thus $\cardV*{\sT} > \cardV*{\sU}$ as required. And this completes the proof of \cref{th:3trees}.
\end{proof}

The smallest example provided by \cref{th:3trees}, for $t = 9$, consists of three trees of $2 \cdot 9^3 + 2 = 1460$ vertices. We suspect that this vertex amount can be lowered significantly, with perhaps more complex arguments. 





\subsection{Consequences for greedy strategies}
\label{subsec:greedy}

\Status{Cyril: J'ai fait une passe jusqu'au status suivant. Je me demande si on ne devrait pas mettre des propositions, histoire de ne pas rester sur uniquement de la discussion. D'un autre coté, ces résultats (négatifs) sont de bonnes remarques, sans plus. Méritent-il mieux ? Bon, finallement, j'ai terminé par une proposition.}

As we will see, \cref{th:2trees,th:3trees} imply that there exist graph families where no greedy strategies can build a minimum isometric-universal graph. Informally, a greedy strategy consists in starting with some initial graph of a given family $\sF$, and then, at each step, picking a new graph of $\sF$ and computing a minimum isometric-universal graph for the pair composed of the current graph and this new graph, and updates the current graph with this new isometric-universal graph.

\def\algoA{\textsf{A}}

More formally, let $\algoA$ be any algorithm (considered as a black-box or oracle) computing an isometric-universal graph for a pair of graphs. A greedy strategy, parametrized by $\algoA$, selects the graphs of a given family $\sF$ in some order, say $G_1,\dots,G_{|\sF|}$, and builds the following sequence of graphs $\sU_1,\dots,\sU_{|\sF|}$:
\[
    \sU_i = \left\{%
    \begin{array}{lll}
        G_1 & \mbox{if $i=1$}\\
        \mbox{\algoA}(\sU_{i-1}, G_i) & \mbox{if $i>1$}\\
    \end{array}
    \right.
\]
By construction, $\sU_{|\sF|}$ is an isometric-universal graph for $\sF$, since, by induction, $\sU_i$ is an isometric-universal graph for $\range*{G_1}{G_i}$.

Unfortunately, no greedy strategies can determine the minimum isometric-universal graph for all families of trees, if the routine $\algoA$ constructs a minimum and minimal isometric-universal graph for any two graphs. (Recall that, for two general graphs, this is NP-hard.) This is due to the fact that, at every step $i$, $\algoA(\sU_{i-1}, G_i)$ will return a tree, since, by \cref{th:2trees}, every minimum and minimal isometric-universal graph of two trees has to be a tree. In particular, the final graph $\sU_{|\sF|}$ will be a tree.

Now, by \cref{th:3trees}, there is a family of trees whose minimum isometric-universal graph cannot be a tree. So, it implies that whatever the order of the trees chosen for the greedy strategy, the result cannot be a minimum isometric-universal graph.

Since by \cref{th:2trees}, $\algoA(\sU_{i-1}, G_i)$ must be a tree, it is legitimate to ask whether, for tree families, greedy strategies can help in computing a minimum isometric-universal \emph{tree}, rather than a minimum isometric-universal graph. Unfortunately, this will fail as well for the following reason.

Consider the family $\sF = \set{T_1,T_2,T_3}$, where $T_i$ is composed of two copies of a $K_{1,r}$, whose centers are connected by a path of length $(i+2)s$, where $r$ and $s = o(r)$ are large enough. See \cref{fig:badtrees} for an example.

\begin{myfigure}
    \newcommand{\ARBRE}[2]{%
    \tikzset{every node/.style={vertex}}
    \def\L{0.85} 

    \foreach \i in {0,...,#2} \node (p\i) at ({\L*\i},1) {};
    \foreach \i in {0,...,4} {
        \node (x\i) at ($(0,0)+(-0.75,0)!\i/5!(1,0)$) {};
        \node (y\i) at ($({\L*#2},0)+(-0.75,0)!\i/5!(1,0)$) {};
      }
    
    \foreach \i in {1,...,#2} \draw (p\the\numexpr\i-1) -- (p\i);

    \graph{ (p0)  -- { (x0), (x1), (x2), (x3), (x4) } };
    \graph{ (p#2) -- { (y0), (y1), (y2), (y3), (y4) } };

    \tikzset{every node/.style={draw=none}}
    \node at (-0.6,1.5) {$T_\the\numexpr#2-2$};

    \tikzset{every node/.style={draw=none,midway}}
    \tikzset{every path/.style={accolade}}
    \draw (x4.east) -- (x0.west) node[yshift=-1.8em] {$r$}; 
    \draw (y4.east) -- (y0.west) node[yshift=-1.8em] {$r$}; 

    \draw [draw=none] (p0) -- (p#2) node[yshift=1em] {$#2s$}; 
    }
    \begin{tikzpicture}[scale=0.8]
        \ARBRE{0}{3}
    \end{tikzpicture}
    \hfill
    \begin{tikzpicture}[scale=0.8]
        \ARBRE{0}{4}
    \end{tikzpicture}
    \hfill
    \begin{tikzpicture}[scale=0.8]
        \ARBRE{0}{5}
    \end{tikzpicture}
    \caption{A family $\set{T_1,T_2,T_3}$ of trees, here for $r=5$ and $s=1$, whose any greedy strategy designed for trees will fail.}
    \label{fig:badtrees}
\end{myfigure}

It is not difficult to see that the minimum isometric-universal graph of $\set{T_i,T_j}$, i.e., $\algoA(T_i,T_j)$, must contain three stars, i.e., three copies of $K_{1,r}$ intersecting only within a constant number of vertices. Indeed, assuming\footnote{As $\algoA(T_i,T_j) = \algoA(T_j,T_i)$, the order of the two first trees does not matter.} $i<j$, $\algoA(T_i,T_j)$ must be composed of $T_j$ where the center of a star of $T_i$ belongs to the path of $T_j$, the other star being in common. In this universal tree, the distances between these three centers are $(2+i)s$, $(2+j)s$, and $(j-i)s$. For instance, for $i=1$ and $j=3$, we have the distance set $\set{3s,5s,2s}$. The two other distance sets being $\set*{3s,4s,s}$ (for $i,j=1,2$) and $\set*{4s,5s,s}$ (for $i,j=2,3$).

Now, by adding the third tree $T_k$ to $\algoA(T_i,T_j)$, by computing $\algoA(\algoA(T_i,T_j),T_k)$, one obtain a universal tree that must contain four stars. Indeed, the only way to get three stars would be to share both stars of $T_k$ with two out of the three stars already in $\algoA(T_i,T_j)$. Unfortunately, the distance $(2+k)s$ between the two star centers of $T_k$ does not belong to the distance set $\set*{(2+i)s, (2+j)s, (j-i)s}$ of $\algoA(T_i,T_j)$. 

The construction as depicted in \cref{fig:goodtree} shows that $\set*{T_1,T_2,T_3}$ has an isometric-universal tree with only three stars, and thus with $3r + O(s)$ vertices, whereas the greedy strategy $\algoA(\algoA(T_i,T_j),T_k)$ gives $4r + O(s)$ vertices, which is bigger for $s = o(r)$ for instance.

\begin{myfigure}
    \begin{tikzpicture}[scale=0.8]
    \tikzset{every node/.style={vertex}}

    \foreach \i in {0,...,5} \node (p\i) at (\i,1) {};
    \node (p6) at (2,0) {};
    \foreach \i in {0,...,4} {
        \node (x\i) at ($(0,0)+(-0.75,0)!\i/5!(1,0)$) {};
        \node (y\i) at ($(5,0)+(-0.75,0)!\i/5!(1,0)$) {};
        \node (z\i) at ($(2,-1)+(-0.75,0)!\i/5!(1,0)$) {};
      }
    
    \foreach \i in {1,...,5} \draw (p\the\numexpr\i-1) -- (p\i);

    \graph{ (p0) -- { (x0), (x1), (x2), (x3), (x4) } };
    \graph{ (p5) -- { (y0), (y1), (y2), (y3), (y4) } };
    \graph{ (p2) -- { (p6) -- { (z0), (z1), (z2), (z3), (z4) } }};
    
    \tikzset{every node/.style={draw=none,inner sep=3pt,midway}}
    \draw [draw=none] (p0) -- (p5) node[yshift=1em] {$5s$}; 
    \draw [draw=none] (p2) -- (p6) node[xshift=1em] {$4s$}; 
    \draw [draw=none] (p2) -- (p6) node[xshift=-1em] {$3s$}; 

    \end{tikzpicture}
    \caption{An isometric-universal tree for $\set{T_1,T_2,T_3}$, depicted in \cref{fig:badtrees}, with only $3r + O(s)$ vertices.}
    \label{fig:goodtree}
\end{myfigure}

To summarize the above discussion, we have shown:

\begin{proposition}
    Let $\algoA$ be any algorithm computing a minimum and minimal isometric-universal graph for a pair of graphs. Then, no greedy strategies based on $\algoA$ can compute a minimum isometric-universal graph (or tree) for all tree families.
\end{proposition}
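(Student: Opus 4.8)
The plan is to establish the two assertions of the statement separately, since they hinge on different counterexample families: first that no greedy strategy can produce a minimum isometric-universal \emph{graph} for every tree family, and then that none can produce a minimum isometric-universal \emph{tree} either.

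For the graph assertion I would first show that the hypotheses on $\algoA$ force every intermediate graph to be a tree. I would argue by induction on $i$: the base case $\sU_1 = G_1$ is a tree, and if $\sU_{i-1}$ is a tree, then $\sU_i = \algoA(\sU_{i-1}, G_i)$ is, by definition of $\algoA$, a minimum and minimal isometric-universal graph for a pair of trees, hence a tree by \cref{th:2trees}. Therefore the final output $\sU_{|\sF|}$ is a tree, whatever ordering the greedy strategy uses. I would then invoke \cref{th:3trees}, which supplies a family $\set{T_1,T_2,T_3}$ of three trees whose every isometric-universal tree is strictly larger than some isometric-universal graph and hence is not minimum. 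Since the greedy output is a tree, it cannot attain the (non-tree) minimum on this family, and the first assertion follows for every ordering at once.

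For the tree assertion the argument is more delicate, and this is where I expect the main difficulty. I would work with the explicit family of \cref{fig:badtrees}: each $T_i$ consists of two copies of $K_{1,r}$ whose centers are joined by a path of length $(i+2)s$, with $r$ large and $s = o(r)$. The key structural step is to analyze $\algoA(T_i,T_j)$ for $i<j$. Because $\algoA$ returns a minimum and minimal graph, \cref{th:2trees} makes it a minimum isometric-universal \emph{tree} for $\set{T_i,T_j}$, and I would argue that it necessarily consists of three copies of $K_{1,r}$ pairwise intersecting in only $O(s)$ vertices: the two trees share one star, and the most economical overlay places the extra star-center of the shorter tree $T_i$ on the connecting path of the longer tree $T_j$. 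This pins down the three pairwise distances between the star-centers to be exactly $(2+i)s$, $(2+j)s$, and $(j-i)s$. Next I would add $T_k$ and examine $\algoA(\algoA(T_i,T_j),T_k)$: to avoid a fourth star, both stars of $T_k$ would have to be identified with two of the three existing stars, forcing the center-distance $(2+k)s$ of $T_k$ to equal one of those three distances; checking the three orderings of $\set{1,2,3}$ shows this never occurs, so a fourth star is unavoidable and the greedy output has at least $4r + O(s)$ vertices. As \cref{fig:goodtree} exhibits an isometric-universal tree for the whole family with only $3r + O(s)$ vertices, and $s = o(r)$, the greedy tree is strictly larger for $r$ large, hence not minimum; since this holds for every ordering, no greedy strategy succeeds on this family.

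The technical heart, and the step I expect to demand the most care, is justifying that a minimum isometric-universal tree of the pair $\set{T_i,T_j}$ has precisely this three-star shape with the stated center-distances: one must rule out more economical overlays and confirm that the three inter-center distances are forced to be $(2+i)s$, $(2+j)s$, and $(j-i)s$. Once this structural claim is settled, the remaining pieces—the induction showing the greedy output is always a tree in the graph case, the arithmetic check that $(2+k)s$ lies in none of the three distance sets, and the vertex-count comparison $4r+O(s) > 3r+O(s)$—are routine.
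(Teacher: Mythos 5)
Your proposal is correct and follows essentially the same route as the paper: the induction via \cref{th:2trees} forcing every greedy intermediate to be a tree combined with \cref{th:3trees} for the graph case, and the analysis of the family of \cref{fig:badtrees} (three stars with forced center-distances $(2+i)s$, $(2+j)s$, $(j-i)s$, the arithmetic exclusion of $(2+k)s$, and the comparison $4r+O(s)$ versus $3r+O(s)$ from \cref{fig:goodtree}) for the tree case. The structural step you flag as the technical heart is indeed the part the paper itself handles only informally ("it is not difficult to see"), so your plan matches the paper's argument in both substance and level of rigor.
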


We note that we have analyzed greedy strategies for its \emph{best-case} running (and not its worst-case as usual), that is for best ordering of the graphs in the family. So, these greedy strategies fail in a strong sense (i.e. whatever the ordering is).


\Status{Cyril: J'ai fait une passe jusqu'à la fin de la section et c'est bon pour moi.}

\Status{Edgar : J'ai fait une passe, ça me semble bien. Le seul point qui me semble plus complexe à comprendre c'est la partie où on justifie la borne inf d'un $k$-iso universal tree.}

\subsection{Extending \cref{th:2trees} to $k$-isometric-universal trees}


Another natural extension of \cref{th:2trees} is to ask whether it holds for $k$-isometric-universal graphs, for $k$ large enough and depending of $n$. 
%
%
More formally, we may ask what is the smallest function $k^*(n)$ defined as:

\begin{myquote}
For every $k \ge k^*(n)$, every minimum and minimal $k$-isometric-universal graph for a pair of trees each with at most $n$ vertices is a tree. 
\end{myquote}

Clearly, every graph $\sU$ containing an $n$-vertex tree $T$ as an $(n-1)$-isometric subgraph, also contains $T$ as an isometric subgraph. Thus, by \cref{th:2trees}, $k^*(n) \le n-1$ for all $n\ge 1$. On the other hand, as we will see hereafter in \cref{th:k0}, there are pairs of $n$-vertex trees for which every minimum $1$-isometric-universal graph is not a tree whenever $n\ge 11$. Thus, $k^*(n) > 1$ for all $n\ge 11$.

We establish the following lower bound for $k^*(n)$:

\begin{theorem}\label{th:k0}
  For every $n\ge 8$, $k^*(n) > \floor{(n-8)/3}$. In other words, there is a pair of trees, each with at most $n$ vertices, for which every minimum $\floor{(n-8)/3}$-isometric-universal graph is not a tree.
\end{theorem}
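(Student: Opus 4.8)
The plan is to exhibit, for $k=\floor{(n-8)/3}$, a pair of trees built from subdivided double-stars whose conflicting path-length requirements can be met cheaply by a single cycle but never by a tree. Call \emph{double-star} $D(p,\ell)$ the tree obtained from two disjoint copies of $K_{1,p}$ by joining their centers with a path of length $\ell$. I would take $T_1=D(p,a)$ and $T_2=D(p,b)$ with
\[
  a=k+2,\qquad b=k+1,\qquad p=k+2 .
\]
Then $a\neq b$, both $a,b>k$, each center has degree $p+1\ge 3$, and $\cardV*{T_1}=3k+7$, $\cardV*{T_2}=3k+6$, both at most $n$ since $n\ge 3k+8$. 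The whole point of choosing $a\ne b$ (both larger than $k$) is that the two different center-to-center distances can be reconciled by two arcs of a cycle without corrupting any distance at most $k$, whereas a tree is forced to keep the two heavy endpoints apart.

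First I would exhibit a small non-tree $\sU$: take two $p$-leaf stars with centers $u,v$, and join $u$ to $v$ by two internally disjoint paths of lengths $a$ and $b$, forming a cycle of length $a+b$. Embedding $T_1$ along the length-$a$ arc and $T_2$ along the length-$b$ arc shows $\sU$ contains both trees. The key verification is that $\sU$ is a $k$-isometric-universal graph: any $\sU$-path between two vertices of an embedded copy either stays inside that copy, or makes a full detour around the cycle (the internal vertices of the other arc connect to the copy only through $u$ and $v$), which costs at least $\min\set*{a,b}=b>k$ edges; hence every pair at tree-distance at most $k$ keeps its distance, and every pair at tree-distance larger than $k$ stays at $\sU$-distance larger than $k$. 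Counting, $\cardV*{\sU}=2+2p+(a-1)+(b-1)=2p+a+b$.

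The heart of the argument — and the step I expect to be the main obstacle — is the lower bound for trees. For any tree $\sT$ that is $k$-isometric-universal for $\set*{T_1,T_2}$, with embedded copies $\hT_1,\hT_2$, I would bound the overlap. Since $\hT_1,\hT_2$ are subtrees of the tree $\sT$, their intersection $\hT_1\cap\hT_2$ is connected (the $\sT$-path between any two common vertices lies in both), so it is a common subtree $W$ of $T_1$ and $T_2$. Now $W$ has at most one vertex of degree at least $3$: a degree-$\ge 3$ vertex of $W$ must map to a degree-$\ge 3$ vertex of each $T_i$, i.e.\ to a center; two such vertices would be at distance $a$ in $T_1$ and at distance $b$ in $T_2$ simultaneously, contradicting $a\neq b$. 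Hence $W$ is a subdivided star, and embedding it into $T_2$ (the one with the shorter path) shows it has at most $p$ unit legs plus one leg of length $b+1$, so $\cardV*{W}\le p+b+2$ (using $p\ge 2$, so the star beats any pure path). Therefore
\[
  \cardV*{\sT}\ge \cardV*{\hT_1}+\cardV*{\hT_2}-\cardV*{\hT_1\cap\hT_2}\ge (a+2p+1)+(b+2p+1)-(p+b+2)=a+3p .
\]

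Finally I would compare the two bounds. Since $p=b+1$, we get $\cardV*{\sU}=2p+a+b=a+3p-1<a+3p\le\cardV*{\sT}$ for every tree $\sT$. Thus no tree can attain the minimum number of vertices, so every minimum $k$-isometric-universal graph for $\set*{T_1,T_2}$ is a non-tree, proving $k^*(n)>\floor{(n-8)/3}$. Beyond the common-subtree bound, the only other delicate point is the $k$-isometry check for $\sU$, which is exactly where the hypotheses $a,b>k$ and $a\ne b$ are used, and which is also what ultimately separates this regime from the $k=\infty$ case of \cref{th:2trees}.
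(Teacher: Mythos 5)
Your proposal is correct, and its skeleton coincides with the paper's: the same pair of trees (double stars whose center-to-center distances are $k+1$ and $k+2$, both exceeding $k$), the same non-tree competitor (two stars whose centers are joined by two internally disjoint paths of lengths $k+1$ and $k+2$, i.e.\ the paper's graph $\sC(k)$), and the same one-vertex saving. The differences are in execution. First, parameterization: the paper attaches as many leaves as the budget $n$ allows ($m=\floor{(n-k-3)/2}$ per star) and then solves $k+1\le m-1$ for $k$, which is where $\floor{(n-8)/3}$ comes from; you instead fix the leaf count at $p=k+2$, exactly the minimum making $b<p$, so your size constraint is simply $3k+7\le n$ (in fact this would even give the marginally stronger bound $k^*(n)>\floor{(n-7)/3}$). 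Second, and more substantively, the tree lower bound: the paper's Claim about $\sT(k)$ argues directly that the four star centers must occupy at least three distinct vertices of any universal tree, and then bounds pairwise star/path overlaps ($\le 2$ or $\le 3$ vertices) by cycle-freeness; you instead use the Helly-type fact that two subtrees of a tree intersect in a connected subtree, observe that this common subtree has at most one vertex of degree $\ge 3$ (two branch vertices would be simultaneously at distance $a$ and $b$ in the host tree, impossible since $a\neq b$), and bound such a subtree of $D(p,b)$ by $p+b+2$ vertices. Both yield the same count ($4k+8$ versus $4k+7$ for the non-tree); your intersection argument is arguably cleaner and more systematic, and it is in the same spirit as the overlap analysis the paper uses elsewhere (in the three-tree lower bound of Theorem~\ref{th:3trees}), while the paper's version of this step is a more ad hoc case analysis. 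Your $k$-isometry check for the cyclic graph (any detour through the other arc costs at least $\min\set{a,b}=k+1>k$ edges) matches the paper's observation that $\sC(k)$ is $k$- but not $(k+1)$-isometric-universal.
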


\begin{proof}
Let $k \in \range{0}{n-7}$, and define $m$ as the largest integer such that $2m+k+3 \le n$. In other words, $m = \floor{(n-k-3)/2} \ge 0$. For each $i\in\set{0,1}$, we define $T_i(k)$ as the tree obtained from of a path with $k+3-i$ vertices where $m$ leaves are attached to both extremities. Note that the two extremities of the path, where leaves are attached, are thus at distance $k+2-i$ in $T_i(k)$. Then, $T_i(k)$ has a total of $2m+k+3-i \le 2m+k+3 \le n$ vertices by the choice of $m$. See \cref{fig:Ti(k)} for an illustration. Note that $T_1(k)$ can be obtained from $T_0(k)$ by contracting one edge of its path.


  \begin{myfigure}
    \begin{tikzpicture}[scale=0.82]
    \tikzset{every node/.style={vertex}}

    \foreach \i in {1,...,5} \node (p\i) at (\i,1) {};
    \foreach \i in {0,...,4} {
        \node (x\i) at ($(1,0)+(-1,0)!\i/4!(1,0)$) {};
        \node (y\i) at ($(5,0)+(-1,0)!\i/4!(1,0)$) {};
      }

    \foreach \i in {1,...,4} \node (q\i) at ($(9,0)+(\i,1)$) {};
    \foreach \i in {0,...,4} {
        \node (u\i) at ($(10,0)+(-1,0)!\i/4!(1,0)$) {};
        \node (v\i) at ($(13,0)+(-1,0)!\i/4!(1,0)$) {};
      }

    \draw (p1) -- (p2) -- (p3) -- (p4) -- (p5);
    \draw (q1) -- (q2) -- (q3) -- (q4);

    \graph{ (p1) -- { (x0), (x1), (x2), (x3), (x4) } };
    \graph{ (p5) -- { (y0), (y1), (y2), (y3), (y4) } };
    \graph{ (q1) -- { (u0), (u1), (u2), (u3), (u4) } };
    \graph{ (q4) -- { (v0), (v1), (v2), (v3), (v4) } };

    \tikzset{every node/.style={draw=none,midway}}
    \tikzset{every path/.style={accolade}}
    \draw (x4.east) -- (x0.west) node[yshift=-1.8em] {$m$}; 
    \draw (y4.east) -- (y0.west) node[yshift=-1.8em] {$m$}; 
    \draw (u4.east) -- (u0.west) node[yshift=-1.8em] {$m$}; 
    \draw (v4.east) -- (v0.west) node[yshift=-1.8em] {$m$}; 
    \draw (p1.west) -- (p5.east) node[yshift=1.8em] {$k+3$}; 
    \draw (q1.west) -- (q4.east) node[yshift=1.8em] {$k+2$}; 

    \tikzset{every node/.style={draw=none}}
    \tikzset{every path/.style={draw=none}}
    \node at (0,1) [above=4pt] {$T_0(2)$};
    \node at (9,1) [above=4pt] {$T_1(2)$};

    \end{tikzpicture}
    \caption{The trees $T_0(k)$ and $T_1(k)$ for $k=2$ and $n=15$. The numbers indicated concern number of vertices, and not edges. We have $m = 5$.}
    \label{fig:Ti(k)}
  \end{myfigure}

    For convenience, let $\sF(k) = \set{T_0(k),T_1(k)}$. Let $\sT(k)$ and $\sC(k)$ be two $k$-isometric-universal graphs for the family $\sF(k)$ as follows (see also \cref{fig:TC(k)}). The graph $\sT(k)$ is obtained from $T_1(k)$ where are attached $m$ leaves to one of its leaves. It has $|V(T_1(k))| + m$ vertices. The graph $\sC(k)$ is obtained from $T_1(k)$ where a path of length $k+2$ is attached between the two centers of the stars of $T_1(k)$. It has $|V(T_1(k))| + k+1$ vertices. 

  \begin{myfigure}
    \begin{tikzpicture}[scale=0.78]
    \tikzset{every node/.style={vertex}}

    \foreach \i in {1,...,4} \node (p\i) at (\i,1) {};
    \foreach \i in {1,...,4} \node (q\i) at ($(9,0)+(\i,1)$) {};
    \foreach \i in {1,...,3} \node (r\i) at ($(9.5,0.5)+(\i,1)$) {};
    \foreach \i in {0,...,4} {
        \node (x\i) at ($(1,0)+(-1,0)!\i/4!(1,0)$) {};
        \node (y\i) at ($(4,0)+(-1,0)!\i/4!(1,0)$) {};
        }
    \foreach \i in {0,...,4} {
        \node (z\i) at ($(5.5,1)+(-1,0)!\i/4!(1,0)$) {};
      }

    \draw (p1) -- (p2) -- (p3) -- (p4);

    \graph{ (p1) -- { (x0), (x1), (x2), (x3), (x4) } };
    \graph{ (p4) -- { (y0), (y1), (y2), (y3), (y4) } };
    \graph{ (y4) -- { (z0), (z1), (z2), (z3), (z4) } };

    \foreach \i in {1,...,4} \node (q\i) at ($(9,0)+(\i,1)$) {};
    \foreach \i in {0,...,4} {
        \node (u\i) at ($(10,0)+(-1,0)!\i/4!(1,0)$) {};
        \node (v\i) at ($(13,0)+(-1,0)!\i/4!(1,0)$) {};
        }

    \draw (q1) -- (q2) -- (q3) -- (q4);
    \draw (q1) -- (r1) -- (r2) -- (r3) -- (q4);
    
    \graph{ (q1) -- { (u0), (u1), (u2), (u3), (u4) } };
    \graph{ (q4) -- { (v0), (v1), (v2), (v3), (v4) } };

    \tikzset{every node/.style={draw=none,midway}}
    \tikzset{every path/.style={accolade}}

    \draw (x4.east) -- (x0.west) node[yshift=-1.8em]{$m$}; 
    \draw (y4.east) -- (y0.west) node[yshift=-1.8em]{$m$}; 
    \draw (u4.east) -- (u0.west) node[yshift=-1.8em]{$m$}; 
    \draw (v4.east) -- (v0.west) node[yshift=-1.8em]{$m$}; 
    \draw (z0.west) -- (z4.east) node[yshift=1.8em]{$m$}; 
    \draw (p1.west) -- (p4.east) node[yshift=1.8em]{$k+2$}; 

    \tikzset{every node/.style={draw=none}}
    \tikzset{every path/.style={draw=none}}
    \node at (0,1) [above=4pt] {$\sT(2)$};
    \node at (9,1) [above=4pt] {$\sC(2)$};
    \node at (11.5,0.6) {$k+2$};
    \node at (11.5,1.8) {$k+3$};

    \end{tikzpicture}
    \caption{The two graphs $\sT(k),\sC(k)$ are $k$-isometric-universal for $\set{T_0(k),T_1(k)}$, and $\sT(k)$ is a minimum $k$-isometric-universal tree, here for $k = 2$ and $n = 15$. The numbers indicated concern number of vertices, and include the extremities of the path in $\sC(k)$.
    }
    \label{fig:TC(k)}
  \end{myfigure}

    It is not difficult to see that $\sT(k)$ is an isometric-universal tree for $\sF(k)$, and thus is $k$-isometric-universal. We check that $\sC(k)$ is $k$-isometric-universal graph for $\sF(k)$ too, but is not $(k+1)$-isometric-universal. Indeed, for every embedding of $T_0(k)$ in $\sC(k)$, the two centers of the stars of $T_0(k)$ are at distance $k+1$ in $\sC(k)$, whereas they are at distance $k+2$ in $T_0(k)$. (Recall that $H$ is a $k'$-isometric subgraph of $G$ if $\dist_G(u,v) \le k'$ implies $\dist_H(u,v) = \dist_G(u,v)$ for all $u,v \in V(H)$.)
    \Cyril{En effet, c'est bien correct, car la définition dit qu'un sous-graphe $H$ de $G$ est $k'$-isométrique si, pour tout $u,v \in H$, $\dist_G(u,v) \le k'$ implique $\dist_H(u,v) = \dist_G(u,v)$. En particulier (et c'est une remarque de la définition au début de l'article), si $\dist_H(u,v) = k'+1$, on doit forcément avoir $\dist_G(u,v) = k'+1$. Ici, $G = \sC(k)$, $H = T_0(k)$ et $k' = k+1$. Donc, si $\sC(k)$ était $(k+1)$-isometric-universal, alors la distance dans $T_0(k)$ entre les centres $c_1,c_2$, qui vaut $\dist_{T_0(k)}(c_1,c_2) = k+2$, devrait impliquer que $\dist_{\sC(k)}(c_1,c_2) = k + 2$, ce qui n'est pas le cas puisque $\dist_{\sC(k)}(c_1,c_2) = k+1$.}
    \Francois{Ok, mais je trouve que parler de distances et de nombres de sommets prête à confusion, surtout dans les figures. J'ai rajouté quelques précisions dans la définition et dans les légendes des figures dans ce sens.}
    
    Let us show that:
    \begin{claim}\label{claim:Tk_min}
        $\sT(k)$ is a minimum $k$-isometric-universal tree for $\sF(k)$.
    \end{claim}

    \begin{proof}
    Each tree $T_i(k)$ contains two stars, that are copies of a $K_{1,m+1}$ (with one leaf being part of the path connecting the centers). Moreover, the centers of the stars are connected by paths of different lengths in $T_0(k)$ and in $T_1(k)$, namely $k+1$ and $k+2$. 

    Let $\sU$ be any minimum $k$-isometric-universal tree for $\sF(k)$. $\sU$ contains $T_0(k)$ as subgraph. If the two centers of the stars of $T_1(k)$ are embedded at the two centers of $T_0(k)$ is $\sU$, then $\sU$ would contain two different paths connecting these two common centers, respectively of length $k+1$ and $k+2$, implying a cycle of length at least three in $\sU$. Therefore, the centers of the stars in $T_0(k),T_1(k)$ must be embedded in at least three different vertices of $\sU$. Now, it is easy to see that two stars not sharing the same center in $\sU$ can share at most two vertices (along the same edge), creating a cycle otherwise. And, a star can share with a path at most three vertices (along two incident edges).

    Altogether, $\sU$ must contain $T_0(k)$ and a copy of $K_{1,m+1}$ sharing at most three vertices with $T_0(k)$. It follows that the number of vertices of $\sU$ is at least
    \[
        |V(T_0(k))| + |V(K_{1,m+1})| - 3 = |V(T_0(k))| + m-1 = |V(T_1(k))| + m ~.
    \]
    This is exactly the number of vertices of $\sT(k)$, which is, we have seen, a $k$-isometric-universal tree for $\sF(k)$. Therefore, $\sT(k)$ is minimum.
    \end{proof}
    
    Consider any $k$ such that $|V(\sC(k))| < |V(\sT(k))|$. Since $\sC(k)$ is a $k$-isometric-universal graph for $\sF(k)$, \cref{claim:Tk_min} implies that, for such $k$, every minimum $k$-isometric-universal graph for $\sF(k)$ is not a tree. Now, the largest integer $k$ such that $|V(\sC(k))| \le |V(\sT(k))| - 1$, i.e., such that $|V(T_1(k))| + k+1 \le |V(T_1(k))| + m-1$, is $k = m-2$. This is $k = \floor{(n-k-3)/2} - 2 = \floor{(n-k-7)/2}$, which is non-negative since $k \le n-7$. Depending whether $n-k-7$ is odd or even, we have $k \in \set{ \floor{(n-8)/3}, \floor{(n-7)/3} }$. 

    Therefore, for $k = \floor{(n-8)/3}$, which is non-negative for $n \ge 8$, the minimum $k$-isometric-universal graph for $\sF(k)$ is not a tree. It follows that $k^*(n) > k$.
\end{proof}

Despite \cref{th:k0} showing that $k^*(n) \ge n/3 - O(1)$, we suspect that $k^*(n) = n/2 + o(n)$, and we leave open to determine the right asymptotic value for $k^*(n)$. On one side we believe that the above proof can be enhanced to reach $k^*(n) \ge n/2 - o(n)$. On the other hand, in the $k^*(n)$-isometric-universal graph, it seems far-fetched to have two disjoint paths of length $> n/2$.

\subsection{NP-completeness}
\label{sec:NPC}

%

In this section we show that \cref{th:2forests} cannot be extended to more than two forests (unless P = NP). In several decision problems, setting the input from~$2$ to~$3$ costs to pass from P to NP-C (for instance $2$-SAT has a polynomial solution and $3$-SAT is NP-complete, same for graph $2/3$-coloring). We present here the same pattern with the problem of isometric-universal graph from~$2$ to~$3$ forests. As a byproduct of our proof, the NP-completeness holds if we restrict the isometric-universal graph to be a forest. This restriction has also been claimed by Rautenbach and Werner~\cite[Theorem~3]{RW24}.

\begin{theorem}\label{th:np-hard}
    Determining whether there exists an isometric-universal graph (or forest) with~$t$ vertices for three forests is NP-complete, even if the three forests have pathwidth at most two.
\end{theorem}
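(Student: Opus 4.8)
The plan is to first establish membership in NP and then prove NP-hardness by a reduction from the \MATCH Problem, which is a classic NP-complete problem. Membership is routine: a certificate consists of the host graph $\sU$ on $t$ vertices together with an embedding of each of the three forests into $\sU$. One verifies a certificate by computing all-pairs distances in $\sU$ (a BFS from each vertex) and in each forest, and checking that every pair of vertices of a forest is at the same distance as its image; this takes time polynomial in $t$ and in the total size of the three forests. For the \emph{forest} variant, one additionally checks that $\sU$ is acyclic.

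For the hardness, I would start from a \MATCH instance with ground sets $A,B,C$, each of size $q$, and a family $\mathcal{M}$ of triples, and build three forests $F_1,F_2,F_3$, the forest $F_i$ being in charge of the $i$-th coordinate. Guided by the maximum-weight-matching interpretation of the two-forest case (\cref{th:2graphs}), the idea is that the number of vertices one saves in a minimum isometric-universal graph equals the total overlap obtainable by identifying components, and that for three forests this overlap is controlled by a three-dimensional matching. Each component will be a subdivided ``double star'' in the spirit of the family used in \cref{th:3trees}, where the subdivision lengths and the branch multiplicities serve as distance labels encoding both the identity of an element and the triple it belongs to. The gadgets are designed so that (i) one component of $F_1$, one of $F_2$, and one of $F_3$ can be isometrically identified along a large common subtree, saving a fixed large amount $W$, precisely when they correspond to a common triple of $\mathcal{M}$; (ii) two stars carrying distinct labels intersect in only a bounded, lower-order number of vertices, as in the intersection estimates proved inside \cref{th:3trees}; and (iii) an acyclicity/triangle obstruction, analogous to the one used there, prevents a single component from taking part in two disjoint merges at once.

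I would then set the threshold to $t = \sum_{i} \cardV{F_i} - (qW - \delta)$, where $\delta$ absorbs the unavoidable lower-order pairwise overlaps, so that a universal graph with at most $t$ vertices exists if and only if one can perform $q$ pairwise-disjoint three-way merges covering every component, i.e. if and only if $\mathcal{M}$ contains a perfect matching. The forward direction is the easy one: from a perfect matching, take $F_1\cupdot F_2\cupdot F_3$ and, for each chosen triple, glue the three corresponding components along their common subtree; the result is a forest with exactly $t$ vertices, which simultaneously handles the ``forest'' variant.

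The converse direction is where I expect the main difficulty. Starting from an arbitrary host graph $\sU$ with at most $t$ vertices, I must show that the only way to meet the vertex budget is through full three-way identifications that are label-consistent and that use each element exactly once, thereby extracting a perfect matching. This requires ruling out ``spurious'' savings: partial overlaps combining inconsistent elements, a single element reused in two triples, and---crucially, since the budget must be beaten by \emph{no} graph and not merely by no tree---extra savings that cycles in $\sU$ might create. The heart of the argument is therefore to prove the intersection bounds for isometric star-subgraphs of an \emph{arbitrary} host, and to choose the distance labels rigid enough that cycles cannot help, while keeping every component a pathwidth-two tree and all gadget sizes polynomial in $q$ and $|\mathcal{M}|$. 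Because the construction only uses subdivided stars joined by paths, the produced forests have pathwidth at most two, which gives the stated strengthening.
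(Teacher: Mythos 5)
Your NP-membership argument and your choice of starting point (a reduction from \MATCH, building one forest per coordinate out of subdivided-star gadgets, with a vertex-count threshold) match the skeleton of the paper's proof. But beyond that skeleton, your proposal is a plan with placeholders rather than a proof, and the placeholders sit exactly where the difficulty lies. The gadgets satisfying your properties (i)--(iii) are never constructed: you stipulate that three components ``can be isometrically identified along a large common subtree, saving a fixed large amount $W$, precisely when they correspond to a common triple,'' but this is the whole problem restated as an assumption. Worse, the symmetric picture you propose has an internal tension. An element $b$ of the second ground set typically lies in several triples, so if the ``large common subtree'' is triple-specific, the component for $b$ must carry one such sub-gadget per triple containing $b$; then a component of $F_2$ and a component of $F_3$ that share a triple can already overlap on that sub-gadget without any consistent $F_1$-component participating, and the clean accounting $t=\sum_i \cardV{F_i}-(qW-\delta)$ collapses, because savings are no longer ``$W$ per matched triple plus lower-order noise.'' Finally, the converse direction --- showing that no host graph, cyclic or not, can meet the budget without encoding a perfect matching --- is explicitly deferred, and that is where essentially all the work is.

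It is instructive to compare with how the paper escapes this tension: its construction is deliberately \emph{asymmetric}. Only the components of $F^X$ encode triples: $F(x)$ strings together copies of the stars $S(y),S(z)$ for all partners of $x$, along a path with a rigid ordering and spacing (alternating subdivision lengths $2$ and $3$), while the components of $F^Y$ and $F^Z$ are single stars $S(w)$ with a claw attached at the center. Stars of distinct labels are pairwise non-isomorphic ($K_{1,3w+3}$ with three branches of length $2n+2-w$), so $S(y)$ embeds isometrically in $F(x)$ iff $y$ is a partner of $x$, and the claws of $F(y)$ and $F(z)$ can share a single extra vertex iff $S(y)$ and $S(z)$ are \emph{adjacent} on the main path of $F(x)$, which happens iff $(x,y,z)\in T$. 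Thus the saving per matched triple is exactly one vertex and the threshold is simply $\cardV{F^X}+n$; there is no parameter $W$ and no $\delta$ to balance, and the soundness argument reduces to local degree/distance arguments around star centers (the paper's Claims analogous to your intersection bounds, proved for arbitrary hosts). If you want to salvage your plan, you should either work out an asymmetric design of this kind, or else prove that your symmetric gadgets with per-triple sub-gadgets admit an exact saving analysis --- which, as noted, seems to require precisely the bookkeeping your threshold formula assumes away.
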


\begin{proof}
    It is not difficult to see that this problem is in NP.  We use a reduction from the \MATCH problem that is NP-complete even if each element is contained in at most three sets (see \cite[Problem SP1 in Appendix A.3.1]{GJ79}). This restriction can be seen as determining a perfect matching in a 3-partite sub-cubic hypergraph.
    
    For the \MATCH problem, we are given three disjoint sets $X,Y,Z$ of size $n$, and a set of triples $T \subseteq X\times Y\times Z$. We ask whether there exists or not a subset of $n$ triples of $T$ that are pairwise independent. In other words, a solution for the problem, called \emph{matching}, is a subset $M \subseteq T$ of size $n$ such that no two triples of $M$ agree in any coordinate.

    In order to encode the \MATCH problem into our problem, consider a instance $(X,Y,Z,T)$ where each element of $X,Y$, or $Z$ is contained in two or three triples\footnote{Clearly, any element has to appear at least once in $T$, and if it appears only once, then we have to take it in any solution, and so reducing the instance.} of $T$. We have $|X| = |Y| = |Z| = n$ and $|T| \in \range{2n}{3n}$, so that the instance is of size $O(n)$.
    
    We construct a family $\sF = \set{F^X,F^Y,F^Z}$ composed of three forests, each with $n$ trees, one tree for each element of $X$, $Y$, and $Z$. More precisely, for each set $W \in \set{X,Y,Z}$, the set of connected components of $F^W$ is denoted by $\set{ F(w) : w\in W}$, where $F(w)$ is the tree associated with the element $w \in W$ of the forest $F^W$. Each tree $F(w)$ will be described precisely hereafter.
    
    For every $w\in\range{1}{2n}$, let $S(w)$ be a subdivided star obtained from a $K_{1,3w+3}$ whose three of its edges have been replaced by a path of length $2n+2-w$. Let $\sS = \range{S(1)}{S(2n)}$ be the collection of all such stars that are pairwise non-isomorphic. See \cref{fig:S2} for examples of such stars.

    \begin{myfigure}
    \newcommand{\STAR}[1]{%
    \pgfmathtruncatemacro\N{2} 
    \pgfmathtruncatemacro\D{3*#1+3} 
    \pgfmathtruncatemacro\L{2*\N+2-#1} 
    \def\sep{0.3}

    \tikzset{every node/.style={vertex}}
    \pgfmathsetmacro\x{(\sep*(\D-1)/2};
    \node (c) at (\x,-1.3) {};
    
    \foreach \i in {1,...,\D} {
        \pgfmathsetmacro\x{\sep*(\i-1)}
        \node (p\i) at (\x,0) {};
        \draw (c) -- (p\i);
    };

    \foreach \b in {1,2,3} {
        \pgfmathsetmacro\x{\sep*(\b-1)}
        \foreach \i in {1,...,\L} {
            \pgfmathsetmacro\y{\i-1}
            \node (q\b\i) at (\x,\y) {};
        };
        \foreach \i in {2,...,\L} {
            \pgfmathtruncatemacro\j{\i-1}
            \draw (q\b\i) -- (q\b\j);
        };
    };

    \tikzset{every node/.style={draw=none}}
    \node[xshift=-1.8em,yshift=-0.6em] at (c) {$S(#1)$};

    \tikzset{every node/.style={draw=none,midway}}
    \tikzset{every path/.style={accolade}}
    \draw (p4.west) -- (p\D.east) node[yshift=1.8em] {$\the\numexpr\D-3$};
    \draw (q11.south) -- (q1\L.north) node[xshift=-1.8em] {$\L$}; 
    }
    \begin{tikzpicture}[scale=0.8]
        \STAR{1}
    \end{tikzpicture}
    \hfill
    \begin{tikzpicture}[scale=0.8]
        \STAR{2}
    \end{tikzpicture}
    \hfill
    \begin{tikzpicture}[scale=0.8]
        \STAR{3}
    \end{tikzpicture}
    \hfill
    \begin{tikzpicture}[scale=0.8]
        \STAR{4}
    \end{tikzpicture}   
    \caption{The set $\sS = \range{S(1)}{S(2n)}$ of subdivided stars for $n=2$.}
    \label{fig:S2}
    \end{myfigure}
    
    It is easy to check that:

    \begin{claim}\label{claim:S(w)}
        Each $S(w) \in \sS$ is a tree of pathwidth two\footnote{We can use the same argument we used on page~\pageref{page:pw2} to show that $T_i$ was of pathwidth two.} with $3n+1$ vertices, its center has degree $3w+3$, $3w$ branches of length one, and three branches of length $2n+2-w \ge 2$.
    \end{claim}

    \myparagraph{The forests $F^Y$ and $F^Z$.} For convenience, we will assume that $Y = \range{1}{n}$ and $Z = \range{n+1}{2n}$, so that $Y\cup Z = \range{1}{2n}$. Actually, any one-to-one mapping from $Y\cup Z$ to $\range{1}{2n}$ is fine. For each $w\in Y\cup Z$, the tree $F(w)$ is composed of a copy of the star $S(w) \in \sS$ where is attached a \emph{claw}, i.e., a copy of a $K_{1,3}$. More precisely, the center of $S(w)$ is merged with a leaf of the claw. The tree $F(w)$ has $\cardV*{S(w)} + \cardV*{K_{1,3}} - 1 = 3n+4$ vertices, and pathwidth two. See \cref{fig:F(w)}.

    \begin{myfigure}
    \begin{tikzpicture}[scale=0.9]
    \def\sep{0.4}
    \tikzset{every node/.style={vertex}}

    \pgfmathsetmacro\x{((\sep*7)/2};
    \node (c) at (\x,-1) {};
    \node (x) at (\x,-1.7) {};
    \node (y) at (\x-0.75,-2) {};
    \node (z) at (\x+0.75,-2) {};

    \foreach \i in {1,...,5,7,8} {
        \pgfmathsetmacro\x{\sep*(\i-1)}
        \node (p\i) at (\x,0) {};
        \draw (c) -- (p\i);
    };
    \draw[dotted] (p5) -- (p7);

    \draw[red,line width=1pt] (y) -- (x) -- (z);
    \draw[red,line width=1pt] (c) -- (x);

    \foreach \b in {1,2,3} {
        \pgfmathsetmacro\x{\sep*(\b-1)}
        \node (q\b1) at (\x,0) {};
        \node (q\b2) at (\x,1) {};
        \coordinate (q\b3) at (\x,1.75) {};
        \node (q\b4) at (\x,2.5) {};
        \draw (q\b1) -- (q\b2);
        \draw[dotted] (q\b2) -- (q\b3);
        \draw (q\b3) -- (q\b4);
    };

    \tikzset{every node/.style={draw=none,midway}}
    \tikzset{every path/.style={accolade}}

    
    \draw (p4.west) -- (p8.east) node[yshift=1.8em] {$3w$};
    
    \draw (q11.south) -- (q14.north) node[xshift=-4em] {$2n+2-w$};
    
    \draw (3,-1.05) -- (3,-2) node[xshift=2.5em] {$K_{1,3}$};

    \draw (3,2.6) -- (3,-0.95) node[xshift=2.5em] {$S(w)$}; 

    \tikzset{every node/.style={draw=none}}
    \tikzset{every path/.style={draw}}
    \node[shift={(-2,1)}] at (y) {$F(w)$};
    \end{tikzpicture}
    
    \caption{The tree $F(w)$ associated with element $w \in Y\cup Z = \range{1}{2n}$ constructed from $S(w) \in \sS$ and a claw $K_{1,3}$ (in red).}
    \label{fig:F(w)}
    \end{myfigure}

    \myparagraph{The forest $F^X$.} Consider any $x\in X$, and define $T(x) = \set{ (x,y,z) \in T }$ to be the set of triples of $T$ containing element $x$. 
    From the restriction of the input, $|T(x)| \in \set{2,3}$.

    Let us denote by $y_1,\dots,y_t$ and $z_1,\dots,z_t$ the elements of $Y$ and $Z$ that are contained in the triples of $T(x)$, where $t = |T(x)|$. The tree $F(x)$ of the forest $F^X$ is obtained from the copies of the $2t$ stars $S(y_1),\dots,S(y_t),S(z_1),\dots,S(z_t)$ where their centers are connected by a path in which each edge is alternatively replaced by a path of length two or three. Furthermore, along the path, the $2t$ star centers are encountered in the specific order $S(y_1) -\circ- S(z_1) -\circ-\circ- S(z_2) -\circ- S(y_2)$ if $t=2$ (dashes mean edges), and in the order $S(y_1) -\circ- S(z_1) -\circ-\circ- S(z_2) -\circ- S(y_2) -\circ-\circ- S(y_3) -\circ- S(z_3)$ if $t=3$. See \cref{fig:F(x)} with the notation therein.
    
    \begin{myfigure}
    \begin{tikzpicture}[scale=0.72]
    \def\sep{3}
    \def\S{{"","y_1","z_1","z_2","y_2","y_3","z_3"}} 
    \def\H{{0,.9,.3,.7,.9,.5,1}} 

    \tikzset{every node/.style={vertex}}
    \def\GET#1{\pgfmathparse{#1}\pgfmathresult}

    \foreach \i in {1,...,6} {
        \pgfmathsetmacro\x{\sep*(\i-1)}
        \pgfmathparse{\H[\i]}
        \pgfmathsetmacro\h{\pgfmathresult}
        \pgfmathsetmacro\z{2.5-\h}
        \coordinate (c\i) at (\x,0);
        \node[draw=none,yshift=-1.1em] at (c\i) {$c(\GET{\S[\i]})$};
        \node[draw=none,yshift=1.7em] at ($(c\i)+(0,2)$) {$S(\GET{\S[\i]})$};
        \draw[rounded corners=4pt] (c\i) -- ++(120:2.5) -- ++(0.5,0) -- ++(-60:\h) -- ($(60:\z)+(c\i)$) -- (c\i);
    };

    \foreach \i in {1,...,5} {
        \pgfmathtruncatemacro\j{\i+1} 
        \pgfmathtruncatemacro\m{Mod(\i,2)} 
        \ifthenelse{\i = 2}
        {\def\P{$p(\GET{\S[\i]},\GET{\S[\j]})$}}
        {\def\P{$p(\GET{\S[\j]},\GET{\S[\i]})$}}
        \ifthenelse{\m = 1}
        {
            \coordinate (u\i) at ($(c\i) + (0.5*\sep,0)$); 
            \node[draw=none,yshift=1em] at (u\i) {\P};
            \draw (c\i) -- (u\i) -- (c\j);
        }
        {
            \coordinate (u\i) at ($(c\i) + (0.35*\sep,0)$);
            \coordinate (v\i) at ($(c\i) + (0.65*\sep,0)$);
            \draw (c\i) -- (u\i) -- (v\i) -- (c\j);
            \node at (v\i) {};
        }
        \node at (u\i) {};
        \node at (c\i) {};
    };
    \node at (c6) {};

    \end{tikzpicture}
    \caption{The tree $F(x)$ associated with the element $x \in X$. In this example, $T(x) = \set{(x_1,y_1,z_1),(x_2,y_2,z_2),(x_3,y_3,z_3)}$, and $F(x)$ is composed of six distinct stars of $\sS$. Vertex $c(w)$ is the center of the copy of $S(w)$ in $F(x)$, and $p(w,w')$ with $w<w'$ is the unique vertex of the path between the centers $c(w)$ and $c(w)$, if $(x,w,w') \in T(x)$.}
    \label{fig:F(x)}
    \end{myfigure}

    Clearly, $F(x)$ is a tree of pathwidth two. Since each star $S(w)$ has $3n+1$ vertices (cf. \cref{claim:S(w)}), the number of vertices of $F(x)$ is $\cardV*{F(x)} = 2t \cdot (3n+1) + 3t-2 = 6tn + 5t-2 \in \set{12n+8,18n+13}$ depending on $t$. By construction, we have $\cardV*{F^X} = \sum_{x\in X} \cardV{F(x)}$. 
    
    An important property of $F(x)$ is that if $(x,y,z) \in T$, then a copy of the stars $S(y)$ and $S(z)$ appears as an isometric subgraph of $F(x)$. However, neither $F(y)$ nor $F(z)$ can appear in $F(x)$ as an isometric subgraph because of their claws. Another important property is that if $(x,y,z) \notin T$, then the stars $S(y)$ and $S(z)$ cannot appear consecutively in the main path of $F(x)$ (if they appear). This is due to the specific order of the stars along the path. In that case, the distance between their centers is at least 5.
    
    Let $\sU$ be any minimum isometric-universal graph for $\sF = \set{F^X,F^Y,F^Z}$. We note that $\sF$ has polynomial size w.r.t. the instance $(X,Y,Z,T)$, since each of the forests $F^X,F^Y,F^Z$ is composed of $n$ trees with $O(n)$ vertices each. 
    
    \myparagraph{Intuition behind the proof.} Each component of $\sU$ must contain the trees $F(x)$, $F(y)$ and $F(z)$ for some elements $x,y,z$. In a favorable case, this component is composed of $F(x)$ plus only one single vertex. Indeed, if $y,z$ are both contained in $T(x)$, then $S(y)$ and $S(z)$ each appears in $F(x)$. It follows that $F(y)$ and $F(z)$ will almost fit in $F(x)$ but for one vertex of their claw. The only way to pack in $F(x)$ using one extra vertex is to share this vertex with their claw by being consecutive on the main path of $F(x)$. This favorable situation will occur simultaneously on each $F(x)$ if $(X,Y,Z,T)$ has a solution, i.e., if $T$ contains a matching for $X\times Y\times Z$. Unfortunately, if $y$ or $z$ are not in $T(x)$, and still want to pack in $F(x)$, then the resulting component would contain more than one vertex more as in $F(x)$. This situation will necessarily occur if $(X,Y,Z,T)$ has no solution. Overall, the number of vertices in $\sU$ will allow us to determine whether a solution exists or not for $(X,Y,Z,T)$.
        
    More formally, our goal is to prove the following claim:

\begin{claim}\label{claim:goal}
    $\cardV{\sU} \le \cardV{F^X} + n$ if and only if $(X,Y,Z,T)$ has a solution. 
\end{claim}

    Actually, we will see that \cref{claim:goal} holds even if we restrict $\sU$ to be a forest. Notice that this claim will complete the proof of \cref{th:np-hard}.

    \myparagraph{Upper bound on $\cardV*{\sU}$.} The easy part of \cref{claim:goal} is to show that if $(X,Y,Z,T)$ has a matching $M \subseteq T$, then $\cardV*{\sU} \le \cardV*{F^X} + n$, even if we restrict $\sU$ to be a forest.
    
    From $M$, a small isometric-universal graph $\sU'$ for $\sF$ can be constructed as follows. For each triple $(x,y,z) \in M$, we construct the graph $U'(x,y,z)$ obtained from $F(x)$ by adding one vertex, say $v(y,z)$, that is connected to the vertex $p(y,z)$ of $F(x)$, the vertex between the centers $c(y)$ and $c(z)$ of the copies of $S(y)$ and $S(z)$. Note that the vertices $c(y),c(z),p(y,z)$ and $v(y,z)$ induced an isometric claw in $U'(x,y,z)$. See \cref{fig:Uxyz}.
    
    By construction of $F(x)$, since $(x,y,z) \in T(x)$, both stars $S(y)$ and $S(z)$ appear as isometric subgraphs of $F(x)$. Moreover, the claw containing $v(y,z)$ proves that a copy of $F(y)$ and of $F(z)$ appear also as isometric subgraphs in $U'(x,y,z)$, as well as a copy of $F(x)$ by construction of $U'(x,y,z)$. 
    Since $M$ covers once each of the elements of $X\cup Y\cup Z$, the graph $\sU' = \bigcup_{(x,y,z)\in M} U'(x,y,z)$ is an isometric-universal graph for $\sF$.

    \begin{myfigure}
    \begin{tikzpicture}[scale=0.9]
    \def\sep{3}
    \def\S{{"","z","y"}} 
    \def\H{{0,.3,.9}} 

    \tikzset{every node/.style={vertex}}
    \def\GET#1{\pgfmathparse{#1}\pgfmathresult}

    \foreach \i in {1,2} {
        \pgfmathsetmacro\x{\sep*(\i-1)}
        \pgfmathparse{\H[\i]}
        \pgfmathsetmacro\h{\pgfmathresult}
        \pgfmathsetmacro\z{2.5-\h}
        \coordinate (c\i) at (\x,0);
        \node[draw=none,yshift=-1.1em] at (c\i) {$c(\GET{\S[\i]})$};
        \node[draw=none,yshift=1.7em] at ($(c\i)+(0,2)$) {$S(\GET{\S[\i]})$};
        \draw[rounded corners=4pt] (c\i) -- ++(120:2.5) -- ++(0.5,0) -- ++(-60:\h) -- ($(60:\z)+(c\i)$) -- (c\i);
    };

    \node[draw=none,yshift=1em] at ($(c1) + (0.5*\sep,0)$) {$p(\GET{\S[1]},\GET{\S[2]})$};
    \coordinate (u1) at ($(c1) + (0.5*\sep,0)$); 
    \coordinate (u0) at ($(c1) - (0.3*\sep,0)$);
    \coordinate (v0) at ($(u0) - (0.3*\sep,0)$);
    \coordinate (u2) at ($(c2) + (0.3*\sep,0)$);
    \coordinate (v2) at ($(u2) + (0.3*\sep,0)$);
    \draw (u0) -- (c1);
    \draw (u2) -- (c2);
    \draw[red,line width=1pt] (c1) -- (u1) -- (c2); 
    \draw[dotted] (v0) -- (u0);
    \draw[dotted] (v2) -- (u2);
    \node at (c2) {};
    \node at (c1) {};

    \coordinate (v) at ($(u1) - (0,0.4*\sep)$);
    \node[draw=none,yshift=-1.1em] at (v) {$v(y,z)$};
    \draw[red,line width=1pt] (u1) -- (v); 
    \node at (v) {};
    \node at (u1) {};

    \end{tikzpicture}
    \caption{The part of the component $U'(x,y,z)$ of $\sU'$ containing $F(y)$ and $F(z)$ as isometric subgraphs. Both subgraphs share their claw (in red).}
    \label{fig:Uxyz}
    \end{myfigure}

    We check that the number of vertices of $\sU'$ is:
    \begin{eqnarray*}
        \cardV{\sU'} &=& \sum_{(x,y,z)\in M} \cardV{U'(x,y,z)} ~=~ \sum_{(x,y,z)\in M} \pare{ \cardV{F(x)} + 1} \\
        &=& \pare{ \sum_{(x,y,z)\in M} \cardV{F(x)}} + |M|\\
        &=& \pare{ \sum_{x\in X} \cardV{F(x)} } + n ~=~ \cardV{F^X} + n\\
    \end{eqnarray*}
    Since $\cardV*{\sU} \le \cardV*{\sU'}$, we have $\cardV*{\sU} \le \cardV*{F^X} + n$, as required for \cref{claim:goal}. We observe that $\sU'$ is a forest, so the upper bound holds even if we restrict $\sU$ to be a forest.
    
    \myparagraph{Lower bound on $\cardV*{\sU}$.} The difficult part of \cref{claim:goal} is to show that if $(X,Y,Z,T)$ has no solutions, then $\cardV{\sU} > \cardV{F^X} + n$.

    We start by noticing that \cref{claim:two_cc}, in the proof of \cref{th:2graphs}, trivially extends to more than two graphs as follows:
    
    \begin{claim}\label{claim:three_cc}
        Each component of any isometric-universal graph for $\sF$ contains at most one tree of every forest of $\sF$.
    \end{claim}
    
    This is because any embedding that would pack two trees of a same forest of $\sF$ into the same connected component of the universal graph would be not isometric. 

    By way of contradiction, we assume that $\cardV*{\sU} \le \cardV*{F^X} + n$, and consider any isometric embedding $\varphi: \sF \to \sU$. For convenience, for each $w \in X\cup Y\cup Z$, we denote by $\hF(w)$, the isometric copy of $F(x)$ in $\sU$ w.r.t. $\varphi$. We have:

    \begin{claim}\label{claim:number_cc}
        $\sU$ has exactly $n$ connected components, even if $\sU$ is a forest.
    \end{claim}
    
    \begin{proof}
        By \cref{claim:three_cc}, $\sU$ must have at least $n$ connected components. If $\sU$ has more than $n$ components, then one of them, say $U$, does not contain any tree of $F^X$, that is $U$ and $\hF(x)$ are disjoint subgraphs of $\sU$. By minimality of $\sU$, component $U$ must contain at least one tree $F(w)$ for some $w \in Y\cup Z$ (otherwise we could remove $U$ from $\sU$ and save vertices). So, $\varphi(F^X)$ and $\hF(w)$ are disjoint subgraphs of $\sU$. It follows that $\cardV*{\sU} \ge \cardV*{F^X} + \cardV*{F(w)}$. This is a contradiction with $\cardV*{\sU} \le \cardV*{F^X} + n$, since by construction $\cardV*{F(w)} > \cardV*{S(w)} > 3n$ (see \cref{claim:S(w)}). And, this holds even if $\sU$ is a forest.
    \end{proof}
    
    For every $w \in X\cup Y\cup Z$, denote by $U(w)$ the unique connected component of $\sU$ containing $\hF(w)$. For a component of $\sU$ containing $\hF(x), \hF(y)$ and $\hF(z)$, we have by definition $U(x) = U(y) = U(z)$. According to \cref{claim:three_cc} and \cref{claim:number_cc}, there are exactly $n$ triples $(x,y,z)$ such that $U(x) = U(y) = U(z)$. Such triples have to form a matching of $X\times Y\times Z$. In particular, since $(X,Y,Z,T)$ has no matching, we must have a some triple $(x_0,y_0,z_0) \notin T$ such that $U(x_0) = U(y_0) = U(z_0)$.

    Our goal is to show that:
    \begin{enumerate}
        \item\label{it1} $\cardV*{U(x)} \ge \cardV*{F(x)} + 1$, for all $x\in X$; and
        \item\label{it2} $\cardV*{U(x_0)} > \cardV*{F(x_0)} + 1$.
    \end{enumerate}
    Because $\cardV*{\sU} = \sum_{x\in X} \cardV*{U(x)}$ and $\cardV*{F^X} = \sum_{x\in X} \cardV*{F(x)}$, \cref{it1} and \cref{it2} are sufficient to prove that $\cardV*{\sU} > \cardV*{F^X} + n$, the desired lower bound for \cref{claim:goal}. These lower bounds hold independently whether $\sU$ is restricted to be a forest or not.

    To prove \cref{it1}, let us show that:
    
    \begin{claim}\label{claim:cardUx+Uw}
        For every $x \in X$ and $w\in Y\cup Z$ such that $U(x) = U(w)$, we have: $\cardV*{\hF(x) \cup \hF(w)} \ge \cardV*{F(x)} + 1$.
    \end{claim}

    \begin{proof}
        Observe that $F(w)$ is not a subgraph of $F(x)$, because $F(w)$ has two adjacent vertices each of degree three or more, the center of $S(w)$ and its claw. This configuration does not appear in $F(x)$. Thus, there must exists an edge $u-v$ of $\hF(w)$ that is not in $\hF(x)$, both subgraphs being connected. So, if $\cardV*{\hF(x) \cup \hF(w)} = \cardV*{\hF(x)}$, i.e., $\hF(x) \cup \hF(w)$ has the same number of vertices as $\hF(x)$ has, then $u$ and $v$ are both in $\hF(x) \cup \hF(w)$. This is a contradiction with the fact $\hF(x)$ must be isometric in $U(x) \supseteq \hF(x) \cup \hF(w)$: the distance between $u$ and $v$ in $U(x)$ and $\hF(x)$ differs. It follows that $\hF(x) \cup \hF(w)$ has more vertices as $\hF(x)$, i.e., $\cardV*{\hF(x) \cup \hF(w)} \ge \cardV*{\hF(x)} + 1 = \cardV*{F(x)} + 1$. 
    \end{proof}

    Since $U(x)$ has to contain $\hF(x)$ and $\hF(w)$ for some $w \in Y\cup Z$, we have proved \cref{it1}. Now, let us prove \cref{it2}, that is:
    
    \begin{claim}\label{claim:cardUx0}
        $\cardV*{U(x_0)} > \cardV*{F(x_0)} + 1$. 
    \end{claim}

    \begin{proof}
        Recall that $(x_0,y_0,z_0) \notin T$ and $U(x_0) = U(y_0) = U(z_0)$, i.e., the component $U(x_0)$ contains each of the subgraphs $\hF(x_0), \hF(y_0),\hF(z_0)$. There are several cases (and sub-cases) to analyze depending on the contents of $T(x_0)$.

        \textbf{\boldmath{Case 1. There is $w\in \set{y_0,z_0}$ that does not appear in $T(x_0)$.}}
        
        It means that $F(x_0)$ does not contain $S(w)$. For convenience, let $U = \hF(x_0) \cup \hF(w)$. Because $U(x_0) \supseteq U$, it suffices to show that $\cardV*{U} > \cardV*{F(x_0)} + 1$, a stronger version of \cref{claim:cardUx+Uw}.
        
        Let $c$ be the vertex of $U$ that corresponds to the center of $\hF(w)$, i.e., the center of the copy of $S(w)$ contained in $\hF(w)$. By \cref{claim:S(w)}, $c$ has degree $(3w+3)+1$ in $\hF(w)$ including its claw. In $U$, $c$ must have a degree at least $3w+4$, otherwise $\hF(w)$ would be not an isometric subgraph of $U$. Moreover, $3w+4$ of its neighbors in $U$ must induce an independent set ($\hF(w)$ being a tree), otherwise $\hF(w)$ would be not an isometric subgraph of $U$. 
        
        \textbf{Sub-case 1.1:} $c$ belongs to $\hF(x_0)$.
        
        If $c$ has degree two or less in $\hF(x_0)$, then at least $(3w+4)-2 \ge 2$ outgoing edges of $c$ in $U$ do not belong to $\hF(x_0)$. Each of these edges has only one endpoint in $\hF(x_0)$, otherwise it would create a shortcut in $U$ and so $\hF(x_0)$ would be not an isometric subgraph of $U$. This implies that $U$ contains at least two vertices that are not in $\hF(x_0)$. In other words, $\cardV*{U} > \cardV*{F(x_0)} + 1$ as required.
        
        If $c$ has degree more than two in $\hF(x_0)$, it must correspond to a center of some star $S(w')$ of $F(x_0)$, with $w'\neq w$ by assumption on $w$. In $\hF(x_0)$, $c$ has degree $3w'+4$ or $3w'+5$.
        
        If $w'<w$, i.e., $w' \le w-1$, then $3w'+5 \le 3(w-1)+5 = 3w+2$. So, two out of the $3w+4$ outgoing edges of $c$ in $U$ are not in $\hF(x_0)$. As see previously, this implies that two extra vertices out of $\hF(x_0)$ are required, and so $\cardV*{U} > \cardV*{F(x_0)} + 1$ as claimed.
        
        If $w'>w$, then $S(w)$ contains three longer branches than $S(w')$ (by one edge or more from \cref{claim:S(w)}). At least one of these long branches exceeds all the branches of $S(w')$ (two long branches of $S(w)$ can be embedded on the main path of $\hF(x_0)$). We check that that the claw of $\hF(w)$ creates also at least one edge out of $\hF(x_0)$. Altogether, there $c$ has at least two outgoing edges of $\hF(x_0)$ providing two vertices out of $\hF(x_0)$, and implying $\cardV*{U} > \cardV*{F(x_0)} + 1$ as claimed.

        \textbf{Sub-case 1.2:} $c$ does not belong to $\hF(x_0)$.
        
        We can assume that $\cardV*{U} \le \cardV*{F(x_0)} + 1$, since otherwise we are done. In this case, since $c$ is not in $\hF(x_0)$, then all the neighbors of $c$ in $U$ must be in $\hF(x_0)$. Moreover, this neighbors must be included in the neighbors of the center $c'$ of some copy of $S(w')$ in $\hF(x_0)$. If not, we can check that some distances in $\hF(x_0)$ or in $\hF(w)$ would be shortened (remember that $3w+4$ neighbors of $c$ must be an independent set in $U$ and that different stars of $\hF(x_0)$, apart their centers, are at distance greater than two). One can reproduce the same analysis as in \cref{claim:cardUx+Uw} where $c$ plays the role of the center of $S(w')$ and $w'>w$. And thus, $\cardV*{U} > \cardV*{F(x_0)} + 1$ as required.

        This completes the first case, observing that the proof holds even if $U$ is a tree.
        
        \textbf{\boldmath{Case 2. $y_0$ and $z_0$ appear in $T(x_0)$.}}
        
        It means that the tree $F(x_0)$ contains $S(y_0)$ and $S(z_0)$. However, since no triple of $T(x_0)$ contains both $y_0$ and $z_0$, then $S(y_0)$ and $S(z_0)$ do not appear consecutively in the main path of $F(x_0)$, and their distance is at least five. (Remember that in the construction of $F(x)$, stars appear along the subdivided path in the specific order $S(y_1) -\circ- S(z_1) -\circ-\circ- S(z_2) -\circ- S(y_2) \cdots$, cf. \cref{fig:F(x)}.) 

    Let $c$ be the vertex of $U$ that corresponds to the center of $\hF(y_0)$ and $c'$ be the vertex of $U$ that corresponds to the center of $\hF(z_0)$. If $c \in \hF(x_0)$, then by the proof of Sub-case 1.1, $c$ must match the center of $S(y_0)$ in $F(x_0)$. 
    If $c \notin \hF(x_0)$, then its neighborhood must be contained in that of the center of $S(y_0)$ in $F(x_0)$, as in the proof of Sub-case 1.2.
    Note that in both cases, the neighborhood of $c$ must be contained in that of the center of $S(y_0)$ in $F(x_0)$.

    Similarly, the neighborhood of $c'$ must be contained in that of the center of $S(y_0)$ in $F(x_0)$.
    
    Now $c$ and $c'$ are at distance at least $5$, and in $\hF(x_0)$, $c$ and $c'$ each has an adjacent vertex of degree $3$. These vertices of degree $3$ must each have a neighbor that is not in $\hF(x_0)$ (otherwise the distances of $F(x_0)$ are not preserved). Therefore there is a vertex at distance either $0$ or $2$ from $c$ that is not in $\hF(x_0)$, and a vertex at distance either $0$ or $2$ from $c'$ that is not in $\hF(x_0)$. These two vertices are not the same, otherwise the distance between $c$ and $c'$ would not be correct. Therefore $\cardV*{U} > \cardV*{F(x_0)} + 1$ as required.
    
    As said previously, by summing up the number of vertices of all the components of $\sU$ and of $F^X$, and by applying \cref{claim:cardUx+Uw} and \cref{claim:cardUx0}, we have showed that $\cardV*{\sU} > \cardV*{F^X} + n$. This proves the desired lower bound part for \cref{claim:goal}.
    \end{proof}
    
This completes the proof of \cref{th:np-hard}.
\end{proof}

\section{Conclusion and Open Question}
\label{sec:conclusion}

We have introduced the notion of $k$-isometric-universal graph, which captures several well-studied variants of universal graph, by taking different values for $k$, with $k=0$ (classical universal graphs), $k=1$ (induced-universal graphs),
or $k=\infty$ (isometric-universal graphs). We have also considered limited families of graphs (by taking two graphs, two trees or forests, all $n$-vertex graphs, ...).

We have mainly proved that, for $k=\infty$, the problem of minimizing the number of vertices of a minimum $k$-isometric-universal graph is polynomial for two forests and NP-hard for three forests. The variants for $k \in\set{0,1}$ are known to be NP-hard.

Along the way, we have proved that any minimum and minimal isometric-universal graph for two trees has to be a tree (\cref{th:2trees}). We propose two open questions, one about the structure of minimum isometric-universal graphs, and one about the complexity for two graphs. More precisely,

\begin{myquote}
    What is the arboricity of a minimum and minimal isometric-universal graph of $t>2$ trees? Is it $t-1$?
\end{myquote}

\noindent
\cref{th:2trees} shows that this is indeed $t-1$ for $t = 2$. Note that \cref{th:3trees} shows that for $t = 3$ trees, arboricity~1 is not possible.

\begin{myquote}
    What is the complexity of finding a minimum isometric-universal graph for two outerplanar graphs? Or two graphs of bounded treewidth?
\end{myquote}

\bibliographystyle{my_alpha_doi}
\def\MYMARGIN{15mm}
%
%

\makeatletter
\ifdefined\MYMARGIN\else\def\MYMARGIN{18.5mm}\fi
\renewenvironment{thebibliography}[1]
  {\if@noskipsec \leavevmode \fi
   \par
   \@tempskipa-3.5ex \@plus -1ex \@minus -.2ex\relax
   \@afterindenttrue
   \@tempskipa -\@tempskipa \@afterindentfalse
   \if@nobreak
     \everypar{}%
   \else
     \addpenalty\@secpenalty\addvspace\@tempskipa
   \fi
   \noindent
   \rlap{\color{lipicsLineGray}\vrule\@width\textwidth\@height1\p@}%
   \hspace*{7mm}\fboxsep1.5mm\colorbox[rgb]{1,1,1}{\raisebox{-0.4ex}{%
     \normalsize\sffamily\bfseries\refname}}%
   \@xsect{1ex \@plus.2ex}%
   \list{\@biblabel{\@arabic\c@enumiv}}%
        {\leftmargin\MYMARGIN
         \labelsep\leftmargin
         \settowidth\labelwidth{\@biblabel{#1}}%
         \advance\labelsep-\labelwidth
         \usecounter{enumiv}%
         \let\p@enumiv\@empty
         \renewcommand\theenumiv{\@arabic\c@enumiv}}%
   \fontsize{9}{12}\selectfont
   \sloppy
   \clubpenalty4000
   \@clubpenalty \clubpenalty
   \widowpenalty4000%
   \sfcode`\.\@m\protected@write\@auxout{}{\string\gdef\string\@pageNumberStartBibliography{\thepage}}}
\makeatother

\bibliography{biblio}


\end{document}